\def\bfK{\mathbf{K}}
\def\bfR{\mathbf{R}}
\def\C{\mathbb{C}}
\def\R{\mathbb{R}}
\def\RH{\mathbb{H}_{+}}
\def\calK{\mathcal{K}}
\def\calD{\mathcal{D}}
\def\calH{\mathscr{H}}
\def\calH{\mathcal{H}}
\def\calD{\mathcal{D}}
\def\erfc{\operatorname{erfc}}
\def\I{\mathrm{I}}
\def\II{\mathrm{II}}
\def\III{\mathrm{III}}
\def\Lap{\Delta}
\def\normal{\mathrm{n}}
\newcommand{\Prob}{{\mathbf P}}
\newcommand{\1}{{\mathbf{1}}}
\newcommand{\re}{\operatorname{Re}}
\newcommand{\im}{\operatorname{Im}}
\newcommand{\E}{{\mathbf E}}
\renewcommand{\d}{{\partial}}
\newcommand{\dbar}{\bar{\partial}}
\theoremstyle{plain}
\newtheorem{thm}{Theorem}[section]
\newtheorem{lem}[thm]{Lemma}
\theoremstyle{definition}
\newtheorem*{def*}{Definition}
\newtheorem*{eg*}{Example}
\theoremstyle{remark}
\newtheorem*{rmk*}{Remark}
\numberwithin{equation}{section}
\begin{document}

\title[]{Edge behavior of two-dimensional Coulomb gases near a hard wall}

\author{Seong-Mi Seo}
\address{Seong-Mi Seo\\
School of Mathematics\\
Korea Institute for Advanced Study\\
85 Hoegiro\\
Dongdaemun-gu\\
Seoul 02455\\
Republic of Korea}
\email{seongmi@kias.re.kr}

\keywords{2D Coulomb gases, Hard wall, Universality, Ward's equation}

\subjclass[2010]{60B20, 60G55, 82D10}


\begin{abstract} 
We consider a two-dimensional Coulomb gas confined to a disk when the external potential is radially symmetric. 
In the presence of a hard-wall constraint effective to change the equilibrium, the density of the equilibrium measure acquires a singular component at the hard wall. In the determinantal case, we study the local statistics of Coulomb particles at the hard wall and prove that their local correlations are expressed in terms of "Laplace-type" integrals, which appear in the context of truncated unitary matrices in the regime of weak non-unitarity. 
\end{abstract}

\maketitle

\section{Introduction}


In this paper, we are interested in the local statistics of the two-dimensional Coulomb gas model which consists of $N$ particles interacting on the complex plane $\C$ via a repulsive Coulomb potential. The system of Coulomb particles with an external potential $Q$ is defined by the Gibbs distribution 
\begin{equation}\label{Gbs}
\Prob_{N,\beta}(\zeta_1,\cdots, \zeta_N) = \frac{1}{Z_{N,\beta}} e^{-\beta H_N(\zeta_1,\cdots,\zeta_N)},\quad \zeta_1,\cdots,\zeta_N \in \C,
\end{equation}
where $\beta>0$ denotes the inverse temperature and
$H_N(\zeta_1,\cdots,\zeta_N)$ is the Hamiltonian of the system at position $(\zeta_1,\cdots,\zeta_N)\in \C^N$,
given by
\begin{equation*}
H_{N}(\zeta_1,\cdots, \zeta_N) = -\sum_{j\ne k}\log|\zeta_k-\zeta_j| + N\sum_{j=1}^{N}Q(\zeta_j).
\end{equation*}
Here, $Z_{N,\beta}$ is the normalizing constant called the partition function of the system. For the special choice $\beta=1$, the system has a determinantal structure and has a close relationship with some non-Hermitian random matrix ensembles. When $\beta=1$, the Gibbs distribution in \eqref{Gbs} coincides with the joint distribution of eigenvalues in random normal matrix models (see e.g., \cite{CZ} and \cite{EF05}) and in the case of $Q(\zeta)=|\zeta|^2$, the particle system corresponds to the eigenvalues of complex Ginibre matrices, which are $N \times N$ non-Hermitian matrices whose entries are i.i.d. standard complex Gaussians \cite{Gin}.

If the external potential grows sufficiently fast near infinity, the particles tend to accumulate on a compact set, which is called the droplet. More precisely, as $N\to \infty$, the empirical measure $\mu_N = N^{-1}\sum_{j=1}^{N}{\delta_{\zeta_j}}$ for the system $\{\zeta_j\}$ converges to an equilibrium measure with compact support, which minimizes the weighted logarithmic energy
\begin{align}\label{Len}
I_Q[\mu]:= \int\!\!\!\int_{\C^2}\log\frac{1}{|\zeta-\eta|}d\mu(\zeta)d\mu(\eta)+\int_{\C}Q(\zeta)d\mu(\zeta)
\end{align}
among all probability measures on $\C$. See \cite{HM, HP98}.
Central limit theorems for the fluctuations of the linear statistics have been proved in \cite{RV} for Ginibre ensemble and \cite{AHM2, AHM3} for random normal matrix ensembles,  i.e., two-dimensional Coulomb gases at $\beta =1$. These results have been extended to general $\beta$ \cite{BNY, LS}. 

Microscopic behaviors of Coulomb gases are relatively less known.  Exact results are known for $\beta = 1$, the determinantal case, due to the correlation structure expressed by orthogonal polynomials.
In the bulk of the droplet, the support of the equilibrium measure, universality of Ginibre kernel 
\begin{equation}\label{Ginker}
G(z,w) = e^{z\bar{w}-|z|^2/2-|w|^2/2}\end{equation}
 has been obtained from an asymptotic expansion for the reproducing kernel of  polynomial Bergman spaces. See \cite{AHM2, Ber}. At a regular boundary point of the droplet,  universality for the kernel 
\begin{equation}\label{K:free}
G(z,w) F(z+\bar{w}),
\end{equation}
where $G$ is the Ginibre kernel \eqref{Ginker} and $F$ is the free boundary plasma function defined by
$$F(z) = \frac{1}{2} \erfc\Big(\frac{z}{\sqrt{2}}\Big) 
=\frac{1}{\sqrt{2\pi}} \int_{-\infty}^{0} e^{-(z-t)^2/2} dt,$$
has been proved in \cite{HW17} by obtaining an asymptotic expansion for orthogonal polynomials near the boundary. For general $\beta$, local densities for two-dimensional Coulomb gases have been studied in for instance \cite{BNY17, L17}. 

If a boundary confinement is imposed to the Coulomb gas system, edge behaviors of the particles may change. One may consider Coulomb gases forced to be in a certain region of the plane. This kind of confinement can be achieved by redefining the external potential to be $+\infty$ outside the region. A Coulomb gas system with a hard edge, the case when the region of confinement is exactly the droplet, was studied in several literatures \cite{Fo, HM, Sm}. In this case, the equilibrium measure does not change, but the local statistics at the hard edge are described by a different kernel. At $\beta = 1$, the Coulomb gas system properly centered and rescaled at the hard edge converges to a determinantal point process with the kernel
\begin{equation}\label{K:hard}
G(z,w) H(z+\bar{w}),\quad \re z ,\,\re w \leq 0, 
\end{equation} where $G$ is the Ginibre kernel and $H$ is the hard edge plasma function 
$$H(z)= \frac{1}{\sqrt{2\pi}}\int_{-\infty}^{0} \frac{e^{-(z-t)^2/2}}{F(t)} dt,\quad \re z \leq 0$$
for a class of external potentials containing radially symmetric ones \cite{A18, AKM}. A family of boundary confinements which interpolates between a free boundary \eqref{K:free} and a hard edge \eqref{K:hard} has been studied in \cite{AKS}.

\begin{figure}
\includegraphics[width=.32\textwidth]{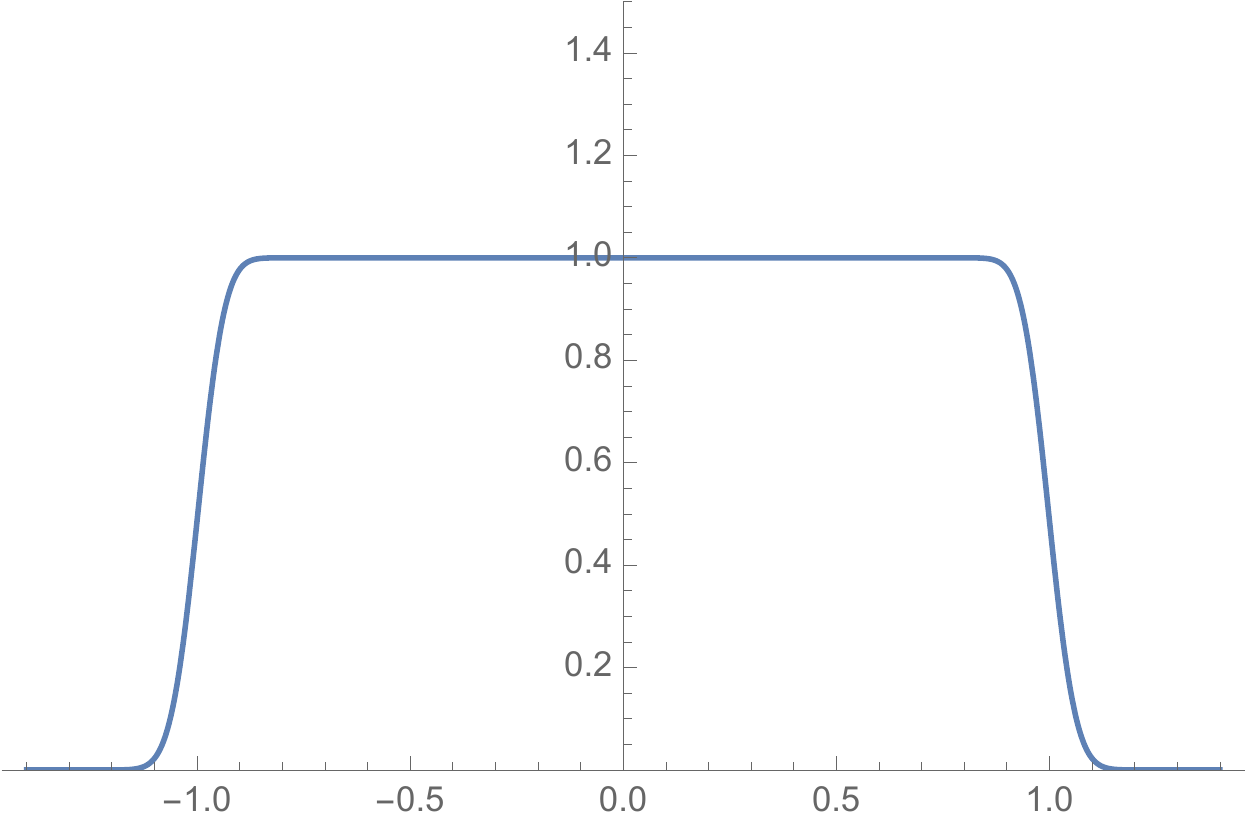}
\includegraphics[width=.32\textwidth]{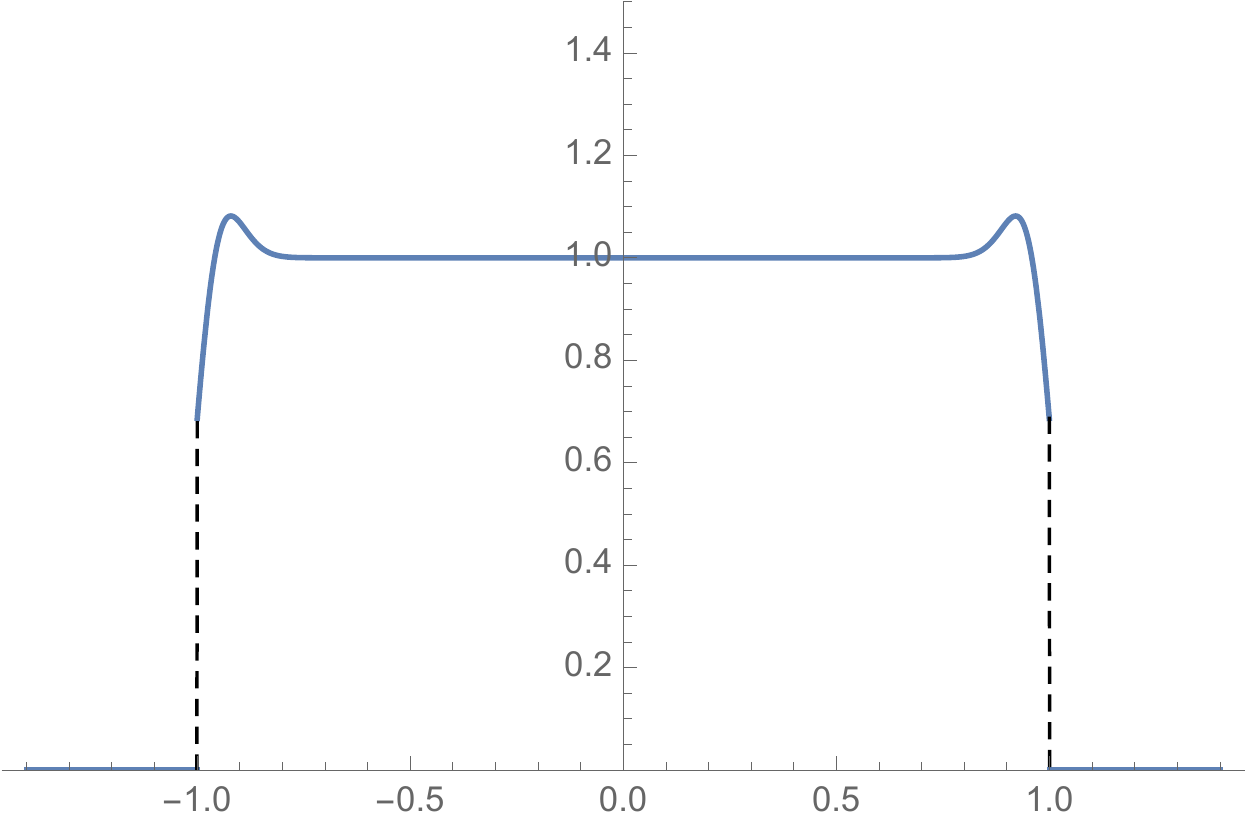}
\includegraphics[width=.32\textwidth]{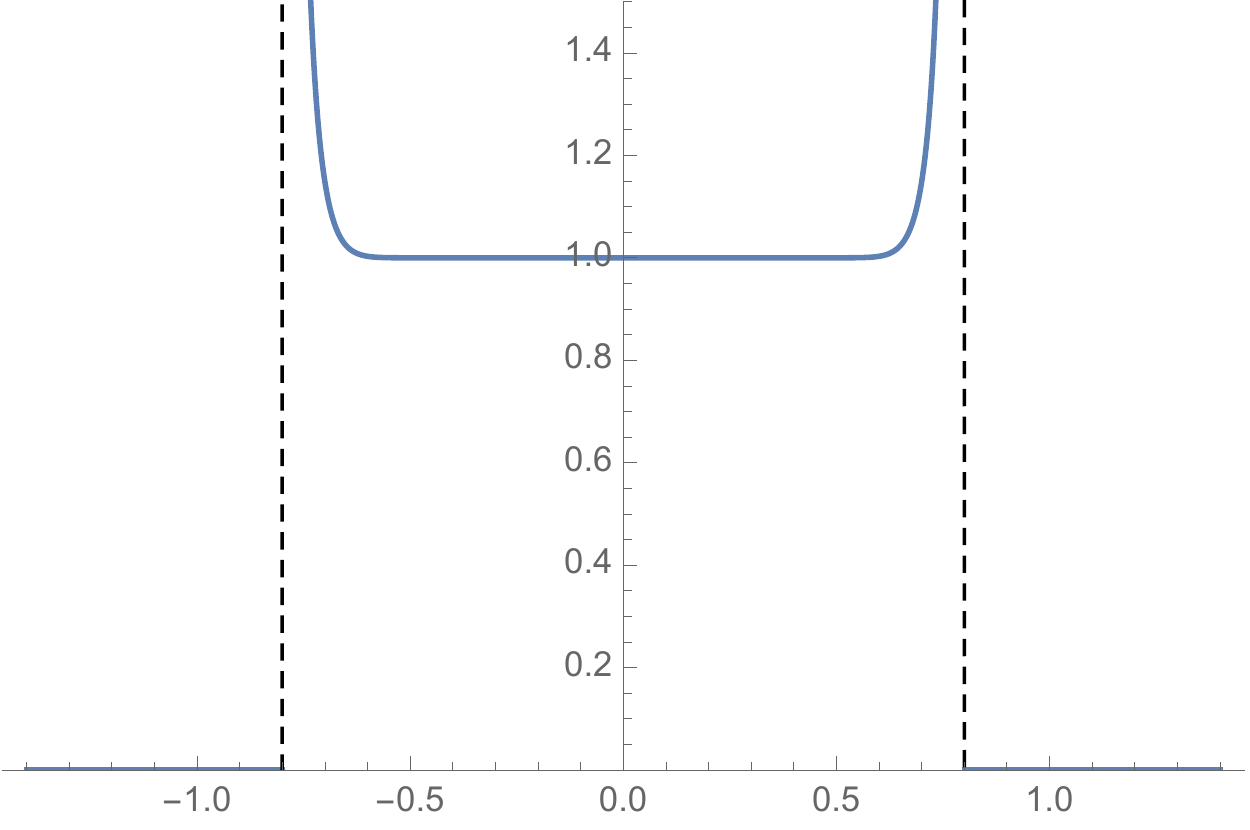}
\caption{Density profiles of Ginibre ensemble with a free boundary (left) and confined in a unit disk (middle) and in a disk of radius $r=0.8$ (right)  when $n=100$.} \label{fig:R}

\end{figure}

In this paper, we will study the Coulomb gases confined in a proper subset of the droplet. Figure \ref{fig:R} shows the density profiles of the Ginibre ensemble $(Q(\zeta)=|\zeta|^2)$ with different boundary confinements. It is well known that the empirical measure for Ginibre ensemble converges to 
the uniform measure on the unit disk. In Figure \ref{fig:R}, the left one shows the density profile of the particles without any constraints on the boundary, where the local behavior of the system near the edge $|\zeta|=1$ is expressed in terms of the free boundary function $F$ in \eqref{K:free}.
The middle one is for the hard edge case where the particles are confined in the droplet (the unit disk) and the local behavior is expressed in terms of the hard edge plasma function $H$ in \eqref{K:hard}. If the Coulomb particles are confined in a smaller disk with radius $r<1$, then a "hard wall" exists along the boundary of the disk and the equilibrium measure changes. In this case, the local behavior is no longer described by the "error function" type kernels like \eqref{K:free} or \eqref{K:hard}. We shall show that the Coulomb particle system at $\beta=1$, properly rescaled at the hard wall in the inward normal direction,
converges to the determinantal point process with kernel represented as a Laplace-type integral
\begin{equation}\label{K:hw}
K^{(0)}(z,w)= \int_{0}^{1} t \,e^{-t(z+\bar{w})} \,dt 
,\quad \re z, \re w > 0
\end{equation}
and this kernel is universal for a class of radially symmetric potentials. Furthermore, under a small perturbation in the external potential defined by adding a logarithmic singularity at the boundary $- \frac{\alpha}{N}\log(r-|\zeta|)$
for $\alpha>-1$, the limiting correlation kernel is of the form 
\begin{equation}\label{K:log}
K^{(\alpha)}(z,w) = (2\re z)^{\alpha/2}(2\re w)^{\alpha/2}\int_{0}^{1} \frac{t^{\alpha+1}}{\Gamma(\alpha+1)}e^{-t(z+\bar{w})} \,dt,\quad \re z,\re w > 0.
\end{equation}

The limiting kernel \eqref{K:log} agrees with the kernel found for truncated unitary matrices introduced in \cite{ZS}. More precisely, a non-Hermitian matrix $J$ which is the upper left block of a unitary matrix $U$ taken randomly from the unitary group $U(M)$ was studied in \cite{ZS}. The eigenvalues of the matrix $J$ are located in the unit disk, and in the case when the size of the truncation is $N\times N$ with $N=M-1-\alpha$ for a non-negative integer $\alpha$, the system of eigenvalues properly rescaled at the circle in the inward normal direction converges to the determinantal point process with correlation kernel \eqref{K:log} as $N\to \infty$. This agreement is due to the fact that the joint distribution of the eigenvalues for truncated unitary matrices agrees with that of two-dimensional Coulomb gases confined in a unit disk with a hard wall at the unit circle, especially when the external potential is $Q(z) = -\frac{\alpha}{N}\log(1-|z|^2)$ for $|z|\leq 1$. 
An application of this model in physics can be found in the theory of quantum chaotic scattering. See \cite{Fyo99, Fyo03} for the random matrix approach to the chaotic scattering and results on random contractions and deformations of Hermitian matrices.

The outline of the paper is the following. In Section \ref{sec:main} we introduce two-dimensional Coulomb gas ensembles with a hard wall and state the main results. In Section \ref{sec:hard} we compute edge scaling limits at the hard wall by approximating weighted orthonormal polynomials. Section \ref{sec:ward} deals with a rescaled version of Ward's equation, which gives an abstract approach to universality for the kernel \eqref{K:log}.

\subsection*{Notation}
$\RH = \{z\in \C : \re z > 0\}$ is the right half plane and $\R_{+} = \{x\in \R : x>0\}$ is the positive real line. 
$D(p,r)$ is the open disk with center $p$ and radius $r$. We write $\d$ and $\bar{\d}$ for the usual complex derivatives and $\Lap = \d \bar{\d}$ for the usual Laplacian on $\C$ divided by $4$. $\1_{U}$ denotes the characteristic function of a set $U$. $dA = dxdy/\pi$ denotes the Lebesgue measure on $\C$ divided by $\pi$ and $ds(\zeta)=\frac{1}{2\pi}|d\zeta|$ denotes the arclength measure divided by $2\pi$. We use the symbol $\delta_N$ for the number $\delta_N = N^{-\frac{1}{2}}\log N$.

\section{Coulomb gas ensembles with a hard wall}\label{sec:main}

\subsection{Equilibrium measures and droplets}

Consider a potential $Q:\C \to \R\cup \{+\infty\}$ which is lower semi-continuous and $Q < +\infty$ on a set of positive area. Suppose that $Q$
satisfies the growth condition 
\begin{equation}\label{Qgr}
\liminf_{|\zeta|\to +\infty} Q(\zeta)/\log|\zeta|^2 > 1.\end{equation} This assumption leads to the existence of the equilibrium measure $\sigma$, the unique probability measure which minimizes the weighted logarithmic potential energy 
$$I_Q[\mu]:= \int\!\!\!\int_{\C^2}\log\frac{1}{|\zeta-\eta|}d\mu(\zeta)d\mu(\eta)+\int_{\C}Q(\zeta)d\mu(\zeta)$$
among all positive unit Borel measures on $\C$. The measure $\sigma$ has a compact support $S$ called the droplet. 
More details can be found in \cite[Section 1]{ST}. 

For a parameter $\tau>0$ we consider the potential $\tau^{-1}Q$ and the corresponding weighted logarithmic energy problem. Under the assumptions on $Q$ above, for $\tau$ with $0<\tau\leq 1$ there exists the unique equilibrium measure $\sigma_\tau$ associated with the external potential $\tau^{-1}Q$. We define $S_\tau$ by the support of the measure $\sigma_\tau$, called the $\tau$-droplet. 
One can consider $\tau$-droplets for $\tau>1$. For this, the growth assumption $\liminf_{|\zeta|\to +\infty} Q(\zeta)/\log|\zeta|^2 > \tau$ instead of \eqref{Qgr} yields the existence and compactness of $S_\tau$. It is well known that $S_\tau$ increase with $\tau$ and evolves according to Laplacian growth. We refer to \cite{HM, HW17, Z} for more results and discussions on this topic.


Throughout the paper, we suppose that $Q$ is radially symmetric, write $Q(\zeta)=q(r)$ for $r=|\zeta|$. 
Under the assumption that $Q$ is subharmonic on $\C$ and $q$ is differentiable on $\R^{+}$ with absolutely continuous derivative, the droplet $S=S_1$ is the ring 
$$\{\zeta \in \C : \rho_0 \leq |\zeta | \leq \rho_1 \}$$
where $\rho_0$ is the smallest number such that $q'(r) > 0$ for all $r>\rho_0$ and $\rho_1$ is the smallest number such that $\rho_1q'(\rho_1)=2$. Moreover, the equilibrium measure $\sigma$ is of the form 
\begin{equation}\label{Oreq}
d\sigma(\zeta) = \Lap Q(\zeta) \,\1_{S}(\zeta) \,dA(\zeta) = \frac{1}{4\pi}(rq'(r))' \1_{S} dr d\theta, \quad \zeta = re^{i\theta}.
\end{equation}
Here, $dA$ denotes the two-dimensional Lebesgue measure divided by $\pi$ and $\1_{U}$ denotes the indicator function for a subset $U \subset \C$. 
See \cite[Section 4.6]{ST} for the proof.

For $\rho_0 \leq r \leq \rho_1$, the total mass of the measure $\sigma$ restricted to $S\cap D(0,r)$ is $\frac{1}{2}rq'(r)$, which is increasing and takes values in $[0,1]$. Also note that for given $\tau$ with $0<\tau \leq 1$, the droplet $S_\tau$ is the ring 
$$S_\tau = \{\zeta\in \C : \rho_0 \leq |\zeta| \leq \rho_\tau \},$$ where $\rho_\tau$ is the smallest number such that $\rho_\tau q'(\rho_\tau)=2\tau$.

\subsection{Localization}\label{sec:loc} 
For a given potential $Q$ and a disk $\calD =D(0,{\rho_{*}})$ of center $0$ and radius $\rho_{*}$, we define the localized potential 
$$Q^{\calD} = Q\cdot \1_{\calD} + \infty \cdot \1_{\C \setminus \calD}.$$ The particles picked randomly with respect to the Gibbs measure \eqref{Gbs} under the external potential $Q^{\calD}$ are completely confined to the set $\calD$, i.e., this localization of the external potential defines the Coulomb gas system constrained to $\calD$.  

In general, this type of localization can be considered for any set of positive area. The notion of local droplet was introduced in \cite[Section 5]{HM} for the droplet with the localization of potentials. In \cite{A18,AKMW}, the local statistics of Coulomb gas constrained to its droplet $S$ have been studied. When the localization takes place in a set $\calD$, a hard wall is created at the boundary of $\calD$. If the set $\calD$ contains a neighborhood of droplet $S$, then the localization does not affect the limiting behavior of particles when $N \to \infty$ so much. However, if the hard wall meets the droplet, then the local statistic of the particles at the hard wall changes noticeably. 

Back to the radially symmetric case, fix a radially symmetric potential $Q$ with the droplet 
$S = \{\zeta\in \C : \rho_0 \leq |\zeta| \leq \rho_1\}.$ We assume that $Q$ is strictly subharmonic in a neighborhood of $\d \calD = \{|\zeta| = \rho_{*}\}$, which implies that the function $rq'(r)$ is strictly increasing near $r=\rho_{*}$. 

Assume that $\rho_0 < \rho_{*} \leq \rho_1$. In this case, the equilibrium measure $\sigma^{\calD}$ associated with $Q^{\calD}$ should be supported in the ring 
$$S^{\calD} = \{\zeta\in \C : \rho_0 \leq |\zeta| \leq \rho_{*}\}.$$
In the case when $\rho_0 <\rho_{*} < \rho_1$, 
considering the balayage problem finding a measure sweeping out the measure $\sigma$ in \eqref{Oreq} from $S \setminus \calD$ to $\d \calD$, we have \begin{equation}
d\sigma^{\calD} = \Lap Q\, \1_{S^{\calD}} \,dA + \frac{1-\tau_{*}}{\rho_{*}}\, ds,
\end{equation}
where $\tau_{*} = \frac{1}{2}\rho_{*}q'(\rho_{*})$ and $ds$ is the arclength measure on $\d \calD$ divided by $2\pi$. For the details of the Balayage measure, see \cite[Section 2.4]{ST}. We also refer to \cite[Section 4.1]{CFLV} for the detailed argument for finding the equilibrium measure. 



\subsection{Scaling limits of Coulomb particle systems}

Let $\calD = D(0,\rho_{*})$ with $\rho_0<\rho_{*} <\rho_1$ and fix a number $\alpha > -1$ and a real-valued function $h$ which is radially symmetric and smooth.
We define an $N$-dependent potential $Q_N$ by adding a small logarithmic singularity and a perturbation $h$ to the original potential $Q$,
$$Q_N(\zeta) = Q(\zeta) - \frac{\alpha}{N}\log(\rho_{*} - |\zeta|) - \frac{1}{N}h(\zeta),\quad \zeta \in \calD.$$
This choice of the potential gives a generalization of the potential $-\frac{\alpha}{N}\log(1-|\zeta|^2)$ localized to the unit disk $|\zeta|\leq 1$, which appears in the study of truncated unitary matrices. For simplicity, we assume that $h(\rho_{*})=0$. 

Consider the Coulomb gas system $\Phi^{\calD}=\{\zeta_j\}_1^N$ associated with the potential $Q_N^{\calD}$. As a point process, the distribution of the system is described by correlation functions. For an integer $k$ with $1\leq k \leq N$, the $k$-point correlation function $\bfR_{N,k}$ is defined as
$$\bfR_{N,k}(\zeta_1,\cdots,\zeta_k) = \frac{N!}{(N-k)!} \int_{\C^{N-k}} \Prob_{N,\beta}(\zeta_1,\cdots,\zeta_N)\, dA(\zeta_{k+1})\cdots dA(\zeta_{N}),$$
where $\Prob_{N,\beta}$ is the Boltzmann-Gibbs distribution defined in \eqref{Gbs} associated with the external potential $Q_{N}^{\calD}$.
It is well known that when $\beta=1$, the Coulomb gas system is determinantal, i.e., the $k$-point correlation function is expressed by the determinant 
$$\bfR_{N,k}(\zeta_1,\cdots, \zeta_k) = \det \left(\bfK_{N}(\zeta_i,\zeta_j)\right)_{i,j=1}^k$$
where $\bfK_{N}$ is a function on $\C^2$ called a correlation kernel of the process. Here $\bfK_{N}$ can be taken as the reproducing kernel of a space of weighted polynomials. More precisely, it is expressed as the sum of weighted orthonormal polynomials 
\begin{equation}\label{bfK}
\bfK_{N}(\zeta,\eta) = \sum_{j=0}^{N-1}p_{N,j}(\zeta)\,\bar{p}_{N,j}(\eta)\, e^{-N(Q_{N}^{\calD}(\zeta)+Q_N^{\calD}(\eta))/2}.
\end{equation}
Here $p_{N,j}$ is an orthonormal polynomial of degree $j$ with respect to the measure $e^{-NQ_N^{\calD}}dA$.
In this paper, we study the case $\beta=1$, the determinantal case.

For the system $\Phi^{\calD}=\{\zeta_j\}_1^N$, we define a rescaled system $\{z_j\}_1^N$ at a point on the hard wall $\d \calD$ when $\rho_0<\rho_{*}<\rho_1$. Fix a point $p\in \d \calD$ and rescale the system by setting 
$$\zeta_j = p + \normal
\frac{\rho_{*}\, z_j}{N(1-\tau_{*})},\quad j=1,\cdots, N,$$
where $\normal$ is the unit inward normal to the boundary $\d \calD$ at $p$. 
Write  
$$\gamma_N = \frac{\rho_{*}}{N(1-\tau_{*})}.$$ 
Note that the scaling factor $\gamma_N$ is a radius of the disk at $p$ such that $$N\cdot \sigma_{\calD} \left(D(p,\gamma_N)\right) = C(1+ o(1)), \quad N\to \infty$$
where $C$ is a positive constant and with this scale, a nontrivial scaling limit of the system is obtained. We may assume that $p = \rho_{*} \in \d\calD \cap \R_{+}$ and the normal is in the negative real direction. The rescaled process $\{z_j\}_1^N$ is also determinantal with correlation kernel 
$$K_{N}(z,w) = \gamma_N^2 \bfK_{N}(\rho_{*}-\gamma_N z, \rho_{*}-\gamma_N w).$$ We write $R_{N,k}$, $K_N$ for the $k$-point correlation function and the correlation kernel of the process $\{z_j\}_1^N$, respectively. 
We are now ready to state one of our main results. Let $\RH$ denote the right half plane $\{z\in\C : \re z >0\}$.

\begin{thm}\label{thm:1}
Let $R_{N,k}$ be the $k$-point correlation function of the rescaled system $\{z_j\}_1^N$. For $z_1,\cdots, z_k \in \C$,
$$\lim_{N\to \infty} R_{N,k}(z_1,\cdots,z_k) = \det \Big( K^{(\alpha)}(z_i,z_k)\Big)_{i,j=1}^{k},$$
where 
\begin{equation}\label{limK}
K^{(\alpha)}(z,w) = (2\re z)^{\frac{\alpha}{2}}(2\re w)^{\frac{\alpha}{2}}\int_{0}^{1} \frac{t^{\alpha+1}}{\Gamma(\alpha+1)}\, e^{-t(z+\bar{w})}  \,dt \cdot \1_{\RH}(z)\,\1_{\RH}(w).\end{equation}
The convergence is uniform for $z_1,\cdots,z_k$ in any compact subset of $\RH$.  
\end{thm}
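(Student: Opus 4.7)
The plan is to exploit the determinantal structure at $\beta=1$ and reduce the statement to uniform convergence on compact subsets of $\RH\times\RH$ of the rescaled correlation kernel $K_N(z,w)$ to $K^{(\alpha)}(z,w)$, modulo a unimodular cocycle $c(z)\overline{c(w)}$ that leaves all correlation functions invariant. Since $Q_N$ is radial, the orthonormal polynomials are monomials $p_{N,j}(\zeta)=\zeta^j/\sqrt{h_{N,j}}$ with
\begin{equation*}
h_{N,j}=2\int_0^{\rho_{*}}r^{2j+1}(\rho_{*}-r)^{\alpha}e^{-Nq(r)+h(r)}\,dr,
\end{equation*}
so the reproducing kernel \eqref{bfK} becomes an explicit sum of $N$ terms and everything is controlled by the asymptotics of the moments $h_{N,j}$.

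The key computation is a boundary-Laplace asymptotic for $h_{N,j}$ in the outer regime $\tau_j:=j/N\in[\tau_{*}+\delta,1]$. Because $rq'(r)-2\tau_j<0$ near $r=\rho_{*}$ in this range, the integrand is monotone increasing there and the integral is end-point concentrated. After the changes of variables $s=\rho_{*}-r$ and $u=2N(\tau_j-\tau_{*})s/\rho_{*}$, Laplace's method combined with the singular factor $s^\alpha$ yields, uniformly in $\tau_j$ on compact subsets of $(\tau_{*},1]$,
\begin{equation*}
h_{N,j}=\frac{\Gamma(\alpha+1)}{2^{\alpha}}\cdot\frac{\rho_{*}^{2j+\alpha+2}\,e^{-Nq(\rho_{*})}}{\bigl(N(\tau_j-\tau_{*})\bigr)^{\alpha+1}}\bigl(1+o(1)\bigr).
\end{equation*}
For the inner indices $\tau_j\le\tau_{*}-\delta$, the weighted monomial $\zeta^{j}e^{-NQ/2}$ concentrates at the radius $\rho_{\tau_j}<\rho_{*}$, and a saddle-point comparison shows its contribution to $\bfK_N$ at any point within $O(\gamma_N)$ of $\rho_{*}$ is exponentially small in $N$, so these indices do not contribute to the limit.

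After substituting $\zeta=\rho_{*}-\gamma_N z$ and $\eta=\rho_{*}-\gamma_N w$, the Taylor expansions
\begin{equation*}
(\rho_{*}-\gamma_N z)^{j}\overline{(\rho_{*}-\gamma_N w)^{j}}=\rho_{*}^{2j}e^{-\tau_j(z+\bar w)/(1-\tau_{*})}\bigl(1+o(1)\bigr),
\end{equation*}
\begin{equation*}
e^{-N(Q_N(\zeta)+Q_N(\eta))/2}=e^{-Nq(\rho_{*})}\gamma_N^{\alpha}(\re z)^{\alpha/2}(\re w)^{\alpha/2}e^{\tau_{*}(\re z+\re w)/(1-\tau_{*})}\bigl(1+o(1)\bigr)
\end{equation*}
combine with the asymptotics of $h_{N,j}$. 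After absorbing the purely imaginary leftover $-i\tau_{*}(\im z-\im w)/(1-\tau_{*})$, which has the form $c(z)\overline{c(w)}$ with $c(z)=e^{i\tau_{*}\im z/(1-\tau_{*})}$ and $|c|=1$, the $j$-th summand of $K_N(z,w)$ becomes
\begin{equation*}
\frac{1}{N(1-\tau_{*})}\cdot\frac{t_j^{\alpha+1}}{\Gamma(\alpha+1)}(2\re z)^{\alpha/2}(2\re w)^{\alpha/2}e^{-t_j(z+\bar w)}\bigl(1+o(1)\bigr),
\end{equation*}
with $t_j=(\tau_j-\tau_{*})/(1-\tau_{*})$. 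As $j$ runs over $\lceil N\tau_{*}\rceil\le j\le N-1$, the $t_j$ form a uniform partition of $[0,1]$ with mesh $1/(N(1-\tau_{*}))$, so the Riemann sum converges to the integral in \eqref{limK}. The contribution of the narrow transition $\tau_j\in(\tau_{*},\tau_{*}+\delta)$ is $O(\delta^{\alpha+2})$ and is disposed of by taking $\delta\to 0$ after $N\to\infty$.

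The main obstacle is proving the Laplace asymptotic for $h_{N,j}$ with error estimates uniform in $j$ across the outer regime: as $\tau_j\downarrow\tau_{*}$, the decay rate $2N(\tau_j-\tau_{*})/\rho_{*}$ degenerates and must be balanced against the singular factor $s^{\alpha}$, forcing a careful two-scale analysis in which the sub-leading $-s/\rho_{*}$ linear correction, the quadratic-in-$s$ term $\tfrac12 q''(\rho_{*})Ns^2$, and the smooth perturbation $h$ must all be controlled so that the $o(1)$ errors remain summable against the Riemann sum. A secondary bookkeeping issue is checking that the imaginary remainder from the Taylor expansion is indeed a unimodular cocycle and hence can be discarded without affecting the correlation functions.
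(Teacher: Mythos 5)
Your overall strategy mirrors the paper's: reduce to convergence of the rescaled kernel $K_N$ (modulo a unimodular cocycle), use the fact that radial symmetry makes the orthonormal polynomials monomials, derive boundary-Laplace asymptotics for the norms $h_{N,j}$, expand $\zeta^j e^{-NQ_N/2}$ near $\rho_*$, and recognize the sum as a Riemann sum for the Laplace-type integral. The asymptotic formula you write for $h_{N,j}$ matches the paper's Lemma 3.1, and your cocycle bookkeeping is correct.

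The genuine gap is in the transition regime around $\tau_j=\tau_*$, and it is not a minor bookkeeping issue. You prove the Laplace asymptotic for $\tau_j\geq\tau_*+\delta$ with fixed $\delta$, dismiss $\tau_j\leq\tau_*-\delta$ by a saddle-point comparison, and then claim the remaining indices contribute $O(\delta^{\alpha+2})$, disposed of by taking $\delta\to 0$ after $N\to\infty$. There are two problems. First, the $O(\delta^{\alpha+2})$ bound for $\tau_j\in(\tau_*,\tau_*+\delta)$ is exactly the estimate you would get \emph{if} the Laplace asymptotic held there, but you excluded that range precisely because the asymptotic degenerates; you cannot invoke the asymptotic to bound the terms where you have not proved it. A rigorous bound in this range requires analyzing $h_{N,j}$ when the decay rate $N(\tau_j-\tau_*)$ is comparable to or smaller than $\sqrt{N}$, where the quadratic term in $V_\tau$ competes with the linear one; this is the ``two-scale analysis'' you flag as the ``main obstacle'' but never carry out. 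Second, you do not address $\tau_j\in(\tau_*-\delta,\tau_*)$ at all: your inner-regime argument covers only $\tau_j\leq\tau_*-\delta$, while for $\tau_j$ within $O(N^{-1/2})$ of $\tau_*$ on the low side the peak $\rho_{\tau_j}$ is within the boundary layer and the contribution is not exponentially small.

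The paper closes these gaps by proving the Laplace asymptotic uniformly all the way down to $\tau_j=\tau_*+M/\sqrt{N}$ (Lemma 3.1, using $V_\tau'(r)\leq -2M/(r\sqrt{N})$ to force concentration in a $\delta_N=N^{-1/2}\log N$ window), so the Riemann sum already covers mesh points with $t_j$ down to $O(N^{-1/2})$ and converges to $\int_0^1$ directly with no $\delta\to0$ step. The remaining indices $N(\tau_*-\delta_N)\leq j < m_N$ are handled by a separate saddle-point expansion around $\rho_{\tau_j}$ leading to an error-function-type quantity $\varphi_\alpha(\xi_\tau)$ (Lemmas 3.3--3.4), giving the explicit bound $O(N^{-(1+\alpha/2)}\log N)$; and $j\leq N(\tau_*-\delta_N)$ is killed by the exponential estimate $O(e^{-c(\log N)^2})$ (Lemma 3.5). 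Your proposal would be complete if you replaced the fixed-$\delta$ cutoff and the unjustified $O(\delta^{\alpha+2})$ claim with an analogue of these lemmas, or equivalently pushed the uniform Laplace asymptotic down to a shrinking scale $\tau_*+M/\sqrt{N}$.
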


\begin{rmk*}
If $\rho_{*} = \rho_1$, then the hard wall stands along the boundary of the original droplet $S$ and the localization doesn't change the droplet and the equilibrium measure. In this case, the microscopic scale $\gamma_N$ at a regular boundary point (where $\d S$ is smooth and $\Lap Q$ does not vanish) appropriate to obtain a nontrivial edge scaling limit should be of order $N^{-1/2}$ same as in the free boundary case. See \cite{A18, AKM, AKS}. 
\end{rmk*}

\begin{figure}
\includegraphics[width=.32\textwidth]{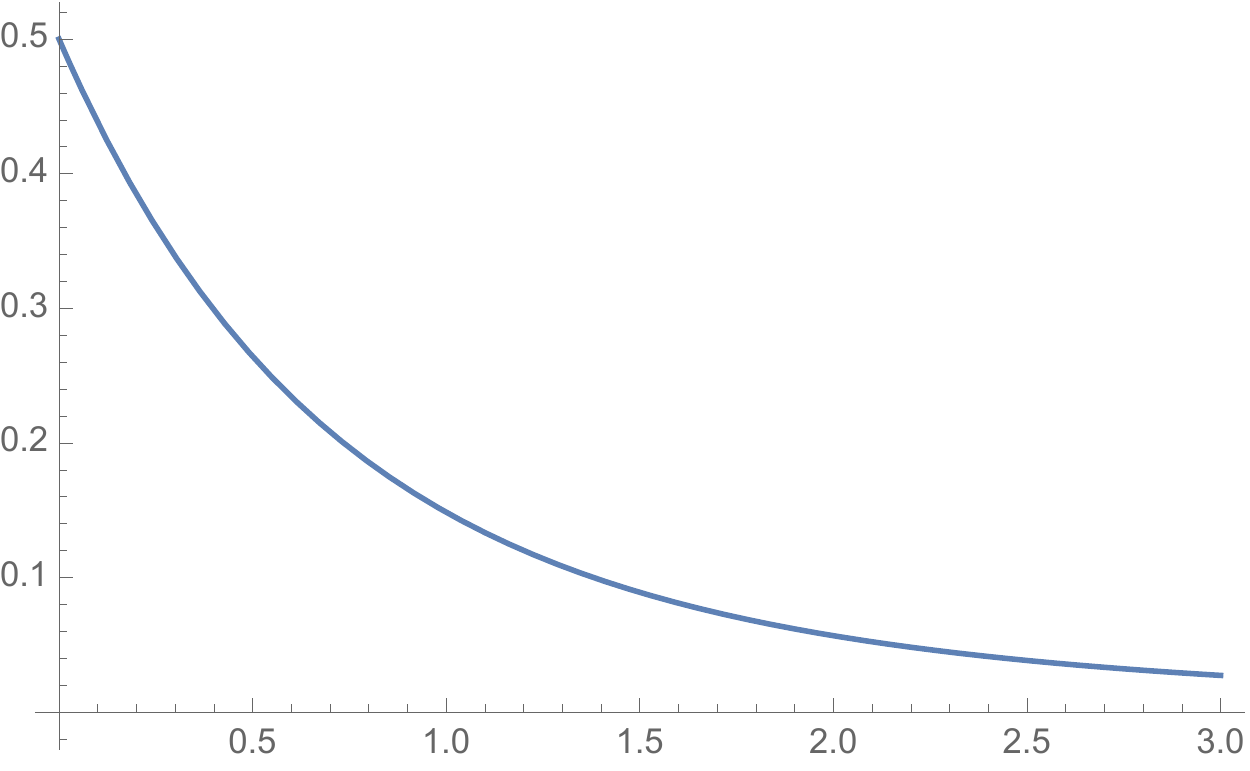}
\includegraphics[width=.32\textwidth]{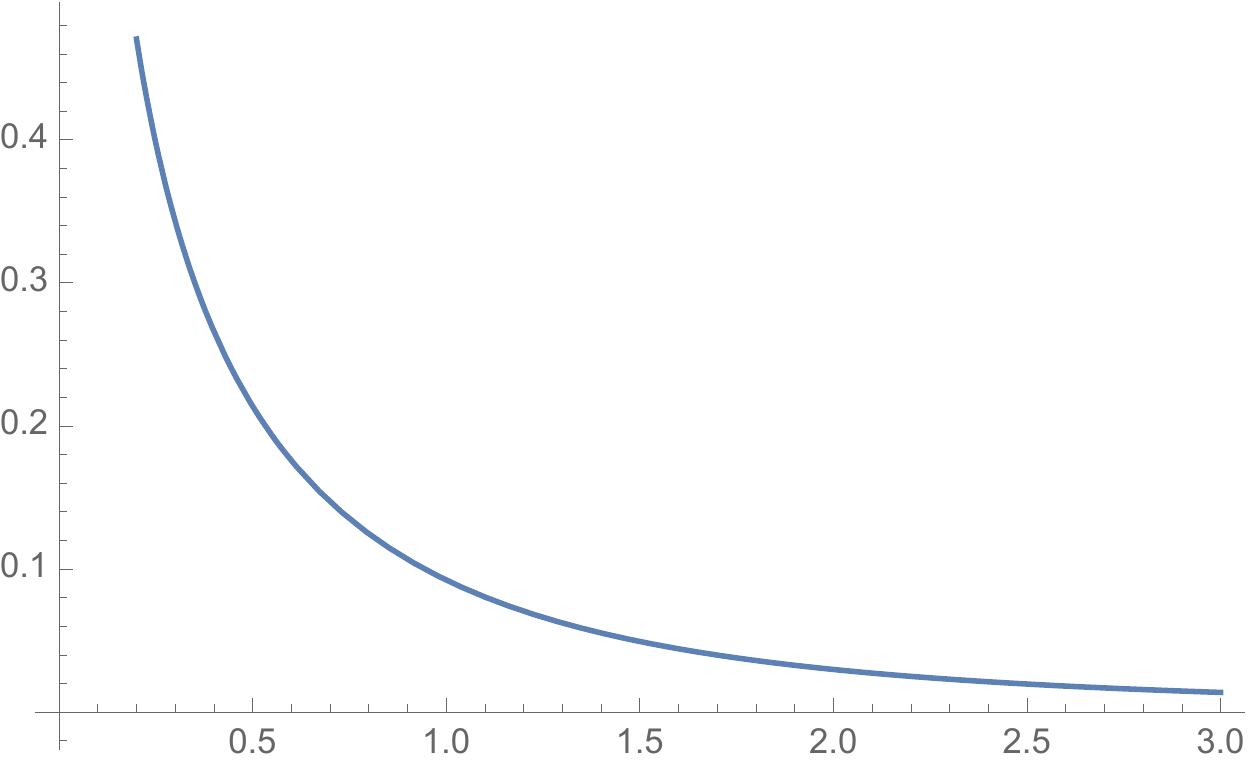}
\includegraphics[width=.32\textwidth]{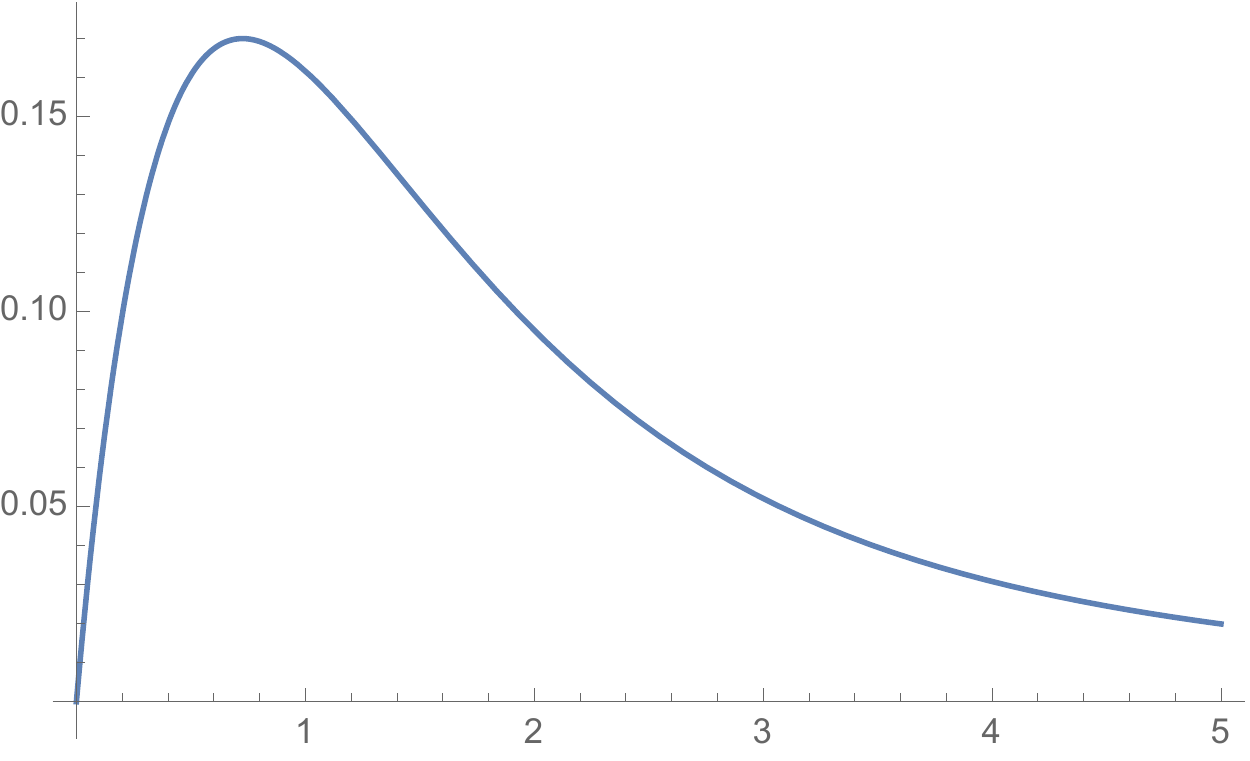}
\caption{Graphs of the limiting $1$-point density $R(x)=K^{(\alpha)}(x,x)$ for $x\in \R_{+}$ with $\alpha=0$, $\alpha=-0.5$, and $\alpha=1$.}\label{fig:alpha}
\end{figure}

Let $\{\zeta_j\}_1^N$ be a random system of Coulomb particles associated with $Q_{N}^{\calD}$. Write $|\zeta|_N$ for the maximal modulus of $\{\zeta_j\}_1^N$, i.e.,
$$|\zeta|_N = \max_{1\leq j\leq N}|\zeta_j|.$$
Rescaling $|\zeta|_N$ about the radius $\rho_{*}$, we define a random variable $\omega_N$ by
$$\omega_N = 2\gamma_N^{-\frac{\alpha+2}{\alpha+1}}\Big(\frac{\rho_{*}}{\Gamma(
\alpha+3)}\Big)^{\frac{1}{\alpha+1}} \left(\rho_{*} - |\zeta|_N \right).$$

\begin{thm}\label{thm:3}
The random variable $\omega_N$ converges in distribution to the Weibull distribution. For $x \geq 0$,
$$\lim_{N\to\infty}\mathbb{P}_N(\omega_N \leq x) = 1- e^{-x^{\alpha+1}}.$$
\end{thm}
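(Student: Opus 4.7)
The plan is to exploit the radial symmetry of the model to reduce the question to a product of independent one-dimensional events, and then apply Laplace-type asymptotics to the resulting radial integrals. Since $Q_N^{\calD}$ is radially symmetric, the orthonormal polynomials for $e^{-NQ_N^{\calD}}dA$ are weighted monomials, and the standard Vandermonde-squared argument (Kostlan's theorem) shows that the unordered tuple $\{|\zeta_j|\}_1^N$ has the same distribution as an independent family $\{R_j\}_{j=0}^{N-1}$, in which $R_j$ is supported on $[0,\rho_{*}]$ with density proportional to $r^{2j+1}(\rho_{*}-r)^{\alpha}\,e^{h(r)-Nq(r)}$. Setting $r_N = \rho_{*}-Cx\gamma_N^{(\alpha+2)/(\alpha+1)}$ with $C = \tfrac12(\Gamma(\alpha+3)/\rho_{*})^{1/(\alpha+1)}$, the event $\{\omega_N\leq x\}$ coincides with $\{|\zeta|_N\geq r_N\}$, so
\[
\mathbb{P}(\omega_N\leq x)\;=\;1-\prod_{j=0}^{N-1}(1-p_j),\qquad p_j := \mathbb{P}(R_j\geq r_N).
\]
It therefore suffices to prove that $\sum_j p_j\to x^{\alpha+1}$ and $\max_j p_j\to 0$; these two estimates together give $\prod_j(1-p_j)\to e^{-x^{\alpha+1}}$ via $|\log(1-p)+p|\leq p^2$.

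For "bulk" indices with $(2j+1)/N\leq 2\tau_{*}-\varepsilon$, the Laplace maximum of $r^{2j+1}e^{-Nq(r)}$ on $[0,\rho_{*}]$ lies in the interior of the interval and is bounded away from $\rho_{*}$, so standard Laplace estimates give $p_j=O(e^{-cN})$ and such indices contribute negligibly. For the remaining "active" indices we parameterize $t_j := \bigl((j+\tfrac12)/N-\tau_{*}\bigr)/(1-\tau_{*})\in(0,1]$, with $t_j$-spacing equal to $1/(N(1-\tau_{*}))$. Writing $r = \rho_{*}-\gamma_N u$ and Taylor-expanding $q$ at $\rho_{*}$, the identities $\gamma_N = \rho_{*}/(N(1-\tau_{*}))$ and $\rho_{*}q'(\rho_{*}) = 2\tau_{*}$ combine to give
\[
r^{2j+1}(\rho_{*}-r)^{\alpha}e^{h(r)-Nq(r)}\;\propto\;(\gamma_N u)^{\alpha}\,e^{-2t_j u}\bigl(1+O(\gamma_N u^2)\bigr),
\]
so the density of $R_j$ in the rescaled variable $u=(\rho_{*}-r)/\gamma_N$ converges to $\frac{(2t_j)^{\alpha+1}}{\Gamma(\alpha+1)}u^{\alpha}e^{-2t_j u}$ on $[0,\infty)$. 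Since $L_N := (\rho_{*}-r_N)/\gamma_N = Cx\gamma_N^{1/(\alpha+1)}\to 0$, replacing $e^{-2t_j u}$ by $1$ on $[0,L_N]$ yields, uniformly in the active range,
\[
p_j \;=\; \frac{(2t_j)^{\alpha+1}L_N^{\alpha+1}}{\Gamma(\alpha+2)}\bigl(1+o(1)\bigr).
\]
A Riemann-sum approximation then gives $\sum_j p_j = \frac{N(1-\tau_{*})(2L_N)^{\alpha+1}}{\Gamma(\alpha+3)}(1+o(1))$, and the identities $(2L_N)^{\alpha+1} = x^{\alpha+1}\Gamma(\alpha+3)\gamma_N/\rho_{*}$ and $N(1-\tau_{*})\gamma_N = \rho_{*}$ (both built into the choices of $C$ and $\gamma_N$) yield $\sum_j p_j\to x^{\alpha+1}$. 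Since $\max_j p_j = O(\gamma_N)\to 0$, the theorem follows.

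The main obstacle is making the Laplace expansion of the previous paragraph uniform in $j$ across the whole active range, especially at its two endpoints: for $t_j\to 0^{+}$ the exponential decay $e^{-2t_j u}$ in the limiting density is weak, so the Laplace-type normalization of the density on $[0,\rho_{*}/\gamma_N]$ begins to feel the finite radial boundary; for $t_j\to 1^{-}$ the index $j$ is close to $N-1$ and one must track $O(1/N)$ corrections in the expansion of $q$ at $\rho_{*}$. Both effects are controllable by the same quantitative Laplace estimates that drive the kernel asymptotics in Section~\ref{sec:hard}, and the smooth perturbation $h$ contributes only $1+O(1/N)$ because $h(\rho_{*})=0$. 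The case $\alpha<0$ requires additional care near $u=0$, but the singularity $u^{\alpha}$ is integrable since $\alpha>-1$.
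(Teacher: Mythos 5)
Your proof follows essentially the same route as the paper's: you invoke the Kostlan-type independence of the moduli (equivalently, the gap-probability product formula \eqref{pbmamo}), reduce to showing $\sum_j p_j\to x^{\alpha+1}$ with $\max_j p_j\to 0$, and compute $\sum_j p_j$ by a Riemann-sum over the Laplace asymptotics of the high-degree weighted monomials, which is exactly the content of Lemma~\ref{lem:IN}.

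One point needs tightening, and it is the one you yourself flag as the "main obstacle." The asserted estimate $p_j=\frac{(2t_j)^{\alpha+1}L_N^{\alpha+1}}{\Gamma(\alpha+2)}(1+o(1))$ is \emph{not} uniform over the whole window $\tau>\tau_*-\varepsilon/2$: for $\tau\in(\tau_*-\varepsilon/2,\tau_*)$ your $t_j$ is negative and the formula is ill-defined, and more seriously, for $|\tau-\tau_*|\lesssim N^{-1/2}$ the Laplace maximum of $r\mapsto r^{2j+1}e^{-Nq(r)}$ sits within $O(N^{-1/2})$ of $\rho_*$, so the normalization $\|\zeta^j\|^2$ is governed by a Gaussian profile, namely the $\varphi_\alpha(\xi_\tau)$ of \eqref{phixi}, rather than the exponential profile $\Gamma(\alpha+1)\bigl(\rho_*/(2N(\tau-\tau_*))\bigr)^{\alpha+1}$ you implicitly use. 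Thus your two-way bulk/active split must become a three-way split, exactly as in the paper: the dominant range $j\ge N\tau_*+M\sqrt N$ where your exponential Laplace estimate (Lemma~\ref{lem:nm1}) is uniformly valid; the intermediate window $N(\tau_*-\delta_N)\le j\le N\tau_*+M\sqrt N$, handled by a separate Gaussian-type estimate (Lemmas~\ref{lem:nm2}--\ref{lem:sum2}) showing its contribution is $o(1)$; and the genuinely exponentially small tail $j\le N(\tau_*-\delta_N)$ (Lemma~\ref{lem:sum3}). Once you make that three-regime decomposition explicit, the argument is complete and coincides with the paper's.
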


The maximal modulus has fluctuation of order $O(N^{-\frac{\alpha+2}{\alpha+1}})$ near the hard wall.
One can see the microscopic effect of the logarithmic singularity $-\frac{\alpha}{N}\log(\rho_{*} - |z|)$ at the hard wall $|\zeta| = \rho_{*}$. (See Figure \ref{fig:alpha}.) We observe a repulsive force between the particles and the hard wall if $\alpha$ is positive and an attractive force between them if $\alpha$ is negative. 

We shall give a proof of Theorem \ref{thm:1} and Theorem \ref{thm:3} using the asymptotics of orthonormal polynomials in Section \ref{sec:hard}.


\subsection{Ward's equation}
We shall discuss an abstract approach to universality based on Ward's equation, a differential equation of $R(z)=K(z,z)$. The rescaled version of Ward's equation considered in this paper does not depend on specific potentials $Q$, and we shall describe solutions $R$ of the equation under some suitable assumptions on $R$. 

Theorem \ref{thm:1} shows the universality of limiting correlation functions for radially symmetric potentials. 
One can observe that the limiting correlation kernel $K=K^{(\alpha)}$ is invariant under the vertical translation,
$$K(z,w) = K(z+is, w+is),\quad s\in \R,$$
i.e., the limiting $1$-point function $R(z)=K(z,z)$ is a function only in $\re z$.

We consider a class of functions defined in $\RH^2$ where
each function is of the form "Laplace-type integral"
\begin{equation}\label{Lapint}
K_f(z,w) = (2\re z)^{\frac{\alpha}{2}}(2\re w)^{\frac{\alpha}{2}}\int_{0}^{\infty} f(t)\, e^{-t(z+\bar{w})}\, dt
\end{equation}
for some Borel measurable function $f$ on $\R_{+}$. Here, if $f(t)=\Gamma(\alpha+1)^{-1}t^{\alpha+1}\cdot \1_{(0,1)}$, then $K_f$ is the limiting correlation kernel $K$ in \eqref{limK}. 

\begin{thm}\label{thm:2}
The limiting correlation kernel $K$ in \eqref{limK} satisfies the mass-one equation $$\int_{\RH} |K(z,w)|^2 \, dA(w) = K(z,z),$$ and the rescaled Ward equation $$\dbar C(z) = R(z) - \Lap \log R(z),\quad z\in \RH,$$
where $R(z)=K(z,z)$ and $C$ is the Cauchy transform
$$C(z)=\int_{\RH} \frac{|K(z,w)|^2}{K(z,z)}\frac{dA(w)}{z-w}.$$
Conversely, if a function $K_f$ of the form \eqref{Lapint}
satisfies the rescaled Ward equation, then there exists a connected set $I$ in $\R_{+}$ such that
$f(t)=\Gamma(\alpha+1)^{-1}t^{\alpha+1}\cdot \1_{I}$ almost everywhere in $\R_{+}$.
\end{thm}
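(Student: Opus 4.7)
My approach rests on two structural facts: the factorization
\[
K_f(z,w) \;=\; (2\re z)^{\alpha/2}(2\re w)^{\alpha/2}\, L_f(z+\bar w), \qquad L_f(\zeta) := \int_0^\infty f(t)\, e^{-t\zeta}\, dt,
\]
which holds for every $K_f$ of the form \eqref{Lapint}, and the resulting translation invariance $K_f(z+is, w+is) = K_f(z,w)$ for $s\in\R$. After a change of variable $w \to w + is$ in the Cauchy integral, this symmetry forces both $R_f$ and $C_f$ to depend only on $x := \re z$, so the rescaled Ward equation collapses to an ODE on $\R_+$.

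The mass-one identity is a Fubini computation: I would write $|K(z,w)|^2 = K(z,w) K(w,z)$ as a double Laplace integral in dummy variables $t,s \in [0,1]$, integrate $\im w \in \R$ first to produce the Fourier-inversion factor $2\pi\delta(t-s)$, and match the resulting single integral against $(2u)^\alpha$ in $\re w$ to $K(z,z)$. For the Ward equation I would combine the distributional identity $\dbar_z (z-w)^{-1} = \pi\, \delta(z-w)$, which extracts the diagonal contribution $R(z) = B(z,z)$ from $\dbar C$, with a second contour integration over $\im w$ applied to $\int_{\RH}\dbar_z B(z,w)\,(z-w)^{-1}\, dA(w)$; the same Fourier mechanism as in mass-one reduces the remainder to a single Laplace-type integral, and an integration by parts matches it to $-\Lap\log R(x)$. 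A convenient intermediate form is
\[
\Lap\log R_f(x) \;=\; -\frac{\alpha}{4x^2} + \mathrm{Var}_{2x}(t),
\]
where $\mathrm{Var}_{2x}$ denotes the variance under the probability density proportional to $f(t)\, e^{-2tx}$; this reformulation reduces the check to an identity among conditional expectations in the Laplace variable.

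For the converse, the same contour argument yields the integral representation
\[
C_f(x) \;=\; \frac{2}{L_f(2x)} \iint_{0<t<s} f(t) f(s) \Bigl[ e^{-2sx}\!\!\int_0^x (2u)^\alpha e^{-2ut}\, du \,-\, e^{-2tx}\!\!\int_x^\infty (2u)^\alpha e^{-2us}\, du \Bigr] dt\, ds.
\]
Substituting this and the explicit expressions for $R_f$ and $\Lap\log R_f$ into the Ward equation and differentiating twice in $x$ transforms the equation into an integral identity of the form $\iint f(t) f(s)\, \mathcal K(x; t, s)\, dt\, ds = 0$ for an explicit kernel $\mathcal K$. Laplace uniqueness on $\R_+$ then forces $f(t)/t^{\alpha+1}$ to be almost everywhere constant on its support, and an asymptotic analysis at the boundary of a hypothetical gap in $\supp f$ rules out disconnected support. \textbf{The main obstacle} is precisely this last rigidity step: while translation invariance and the Laplace structure reduce the converse to a single integral identity in $f$, extracting that $f$ must be supported on a single connected interval requires a careful Laplace-uniqueness or positivity argument, which I expect to be the most involved part of the proof.
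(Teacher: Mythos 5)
Your plan follows essentially the same route as the paper: the factorization $K_f(z,w) = (2\re z)^{\alpha/2}(2\re w)^{\alpha/2}L_f(z+\bar w)$, the Fubini/Fourier-inversion computation for mass-one (the paper integrates out $v=\im w$ to produce a $\delta(s-t)$, exactly as you describe), and your integral representation for $C_f(x)$ is consistent with the paper's decomposition $R\,C = -L_1+L_2$ after relabeling $s\leftrightarrow t$. Your variance reformulation $\Lap\log R_f = -\tfrac{\alpha}{4x^2}+\mathrm{Var}_{2x}(t)$ is correct and is a nice alternative way to package $\d_x^2\log L_f(2x)$, though the paper never needs it explicitly. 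Likewise, the paper avoids the distributional identity $\dbar(z-w)^{-1}=\pi\delta$ by computing $R\,C$ directly and then taking real-variable derivatives, which sidesteps delicate distributional bookkeeping; your route is equivalent.

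The one place you genuinely under-specify — and you flag it yourself — is the converse, and the paper resolves it more cleanly than your proposed ``asymptotic analysis at the boundary of a hypothetical gap.'' The paper notes that $-\dbar(L_1/R)=R$ \emph{exactly}, so the Ward equation collapses to $\dbar(L_2/R)=-\dbar(\d R/R)$; integrating in $x$ (not differentiating twice) yields $L_2+\d R = cR$ for a constant $c$, which after Laplace uniqueness becomes
\begin{equation}
\int_0^t f(s)\,s^{-(\alpha+1)}\Gamma(\alpha+1)\,ds = t + c \quad\text{whenever } f(t)\neq 0.
\end{equation}
Differentiating on $\{f\neq 0\}$ gives $f(t)=\Gamma(\alpha+1)^{-1}t^{\alpha+1}$ there, and then the identity reads $|E\cap(0,t)|=t+c$ for every $t\in E:=\{f\neq 0\}$, where $|\cdot|$ is Lebesgue measure. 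For any $t_1<t_2$ in $E$ this forces $|E\cap(t_1,t_2)|=t_2-t_1$, so $E$ is an interval up to a null set — no asymptotic analysis at gap boundaries is required. This measure-theoretic observation is the missing rigidity step in your plan; without it the converse is not actually proved. I would also caution that ``Laplace uniqueness forces $f(t)/t^{\alpha+1}$ a.e.\ constant on its support'' is not quite the right intermediate statement: what Laplace uniqueness gives is the displayed integral identity above on $\{f\neq 0\}$, and the conclusion about $f/t^{\alpha+1}$ only follows after differentiation.
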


\subsection{Comments and related works}
\subsubsection{Microscopic properties of two-dimensional Coulomb gases}
In the microscopic regime, scaling limits of two-dimensional determinantal Coulomb gases depend on the regularity of the equilibrium density and also that of the boundary of the droplet. Microscopic properties near a bulk singularity, an isolated point in the interior of the droplet at which the equilibrium density $\Lap Q$ vanishes, are determined by the dominant terms in the Taylor expansion of $\Lap Q$ at the point \cite{AS1}.
In \cite{AKS0}, scaling limits of the correlation functions near a logarithmic singularity in the bulk of the droplet have been studied and the universality of the limiting correlation kernel expressed by Mittag-Leffler functions has been proved. For the edge scaling limit near a regular boundary point at which $\Lap Q>0$ and $\d S$ is smooth, the universality of $\erfc$ kernel has been studied in \cite{AKM, HW17}. Scaling limits near a singular boundary point which is a cusp or a double point have been investigated in \cite{AKMW}.

\subsubsection{The maximal modulus of Coulomb gases}
In the free boundary case, it has been proved that a properly scaled maximal modulus of Coulomb gases follows Gumbel distribution for $Q(\zeta)=|\zeta|^2$ (Complex Ginibre ensemble) \cite{R03} and for a class of radially symmetric potentials \cite{CP14}. 
In the hard edge case, the case when the Coulomb particles are confined to the droplet $S$, i.e., $\rho_{*} = \rho_1$ in our setting, the distribution of the properly scaled maximal modulus converges to the exponential distribution when $\alpha=0$. See \cite{Seo}. In this case, the scaling factor is of order $N^{-1}$ while the proper one for the case when $\rho_{*}<\rho_1$ is of order $N^{-2}$. For the potential $Q(\zeta) = -\frac{\alpha}{N}\log(1-|\zeta|^2)$, the convergence of the distribution of the maximal modulus to the Weibull distribution has been studied in \cite{GQ, LGMS}.

\subsubsection{Coulomb gases with a hard wall}
A family of two-dimensional determinantal Coulomb gases confined to an ellipse has been investigated in \cite{NAKP}. This model gives an elliptic generalization of truncated unitary matrix ensembles and also new universality classes in the limit of weak and strong non-Hermiticity. In \cite{GZ18}, the fluctuation of the extremal Coulomb particle has been studied in the case when the equilibrium measure is the uniform measure on the circle and when the hard edge constraint is imposed on the unit circle. For Coulomb gases in any dimension,
the third-order transition between the `pushed' and the `pulled' phases, which correspond to the cases when $\rho_{*} < \rho_1$ and when $\rho_{*} > \rho_1$ respectively in our setting,
has been studied in \cite{CFLV}.

\section{Hard edge scaling limits}\label{sec:hard}

For a number $\alpha > -1$ and a smooth real-valued function $h$ which is radially symmetric and $h(\rho_{*})=0$, we consider the potential $$Q_N(\zeta)=Q(\zeta) - \frac{\alpha}{N}\log(\rho_{*}-|\zeta|)-\frac{1}{N}h(\zeta),\quad |\zeta|<\rho_{*}.$$
Since $Q$ is radially symmetric, the orthonormal polynomials $p_{N,j}$ can be taken as monomials. 
For the Coulomb gas system $\Phi^{\calD}$ associated with the localized potential $$Q^{\calD}_N(\zeta) = Q_N(\zeta)\cdot \1_{\calD}(\zeta) + \infty\cdot \1_{\C\setminus\calD}(\zeta),\quad \calD = D(0,\rho_{*})$$ defined in Section \ref{sec:loc}, the corresponding correlation kernel is given by \begin{align}\label{corrk}
\bfK_N(\zeta,\eta) = \sum_{j=0}^{N-1}\frac{(\zeta\bar{\eta})^j}{\|\zeta^j\|^2_{L^2(\mu_N)}}e^{-N(Q_{N}(\zeta)+Q_{N}(\eta))/2}\1_{\calD}(\zeta) \1_{\calD}(\eta),\end{align}
where $d\mu_N = e^{-N Q_{N}}\1_{\calD}dA$. Here, we write $p_{N,j}$ for the orthonormal polynomial of degree $j$ with respect to the measure $\mu_N$, i.e.,
$$p_{N,j}(\zeta) = \zeta^j /\|\zeta^j\|_{L^2(\mu_N)}.$$

\subsection{Asymptotics of rescaled correlation kernels.} \label{ssec:asym}

Let $\tau_{*}$ denote the number $\frac{1}{2}\rho_{*} \, q'(\rho_{*})$ and $m_N$ denote the number $N\tau_{*} + M\sqrt{N}$ for sufficiently large $M$.
We first consider the orthonormal polynomial of degree $j$ with $m_N \leq j \leq N-1$ for sufficiently large $M$. These terms of higher degree 
are dominant in the sum \eqref{corrk} near the boundary $\d \calD$. Write
\begin{equation}\label{Kshp}
K_N^{\#}(z,w)= \gamma_N^2 \sum_{j \geq m_N}^{N-1} 
p_{N,j}(\zeta)\bar{p}_{N,j}({\eta})\,
e^{-N(Q_{N}(\zeta)+Q_{N}(\eta))/2}\end{equation}
where $\zeta = \rho_{*} - \gamma_N z$, $\eta=\rho_{*} - \gamma_N w$. We write $\tau = j/N$.

To compute the limit of $K_N^{\#}$ as $N\to \infty$, we first consider the function 
$$V_{\tau}(r)=q(r) - 2\tau\log r,\quad r>0$$
and observe that the Taylor expansion
\begin{equation}\label{ExV}
V_{\tau}(r) = V_{\tau}(\rho_{*}) + \frac{2}{\rho_{*}}\left(\tau_{*}-\tau \right)(r-\rho_{*}) + O(r-\rho_{*})^2,\quad r\to \rho_{*},  
\end{equation}
holds by a simple calculation
$$V_{\tau}'(\rho_{*}) = q'(\rho_{*})-\frac{2\tau}{\rho_{*}} = \frac{1}{\rho_{*}}\left(\rho_{*}\,q'(\rho_{*}) - {2\tau} \right)=\frac{2}{\rho_{*}}\left(\tau_{*} - \tau\right).$$


\begin{lem}\label{lem:nm1}
Assume that $j$ is in the range $m_N \leq j \leq N-1$. We have
$$e^{NV_{\tau}(\rho_{*})}\, \|\zeta^j\|_{L^2(\mu_N)}^2  = 2\rho_{*} \,\Gamma(\alpha+1)  
\Big(\frac{\rho_{*}}{2N\left(\tau-\tau_{*}\right)}\Big)^{\alpha+1}(1+o(1))$$ as $N\to \infty.$
Here, the error term $o(1)$ is uniform in $j$.
\end{lem}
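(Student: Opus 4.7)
The strategy is a Laplace-type asymptotic analysis at the boundary endpoint $r=\rho_{*}$ of a one-dimensional radial integral. Using the radial symmetry of $Q$ and $h$ and integrating out the angle in $dA = r\,dr\,d\theta/\pi$, I would first write
$$\|\zeta^j\|_{L^2(\mu_N)}^2 = 2\int_0^{\rho_{*}} r\,e^{-N V_\tau(r)}\,(\rho_{*}-r)^\alpha\,e^{h(r)}\,dr,\qquad \tau = j/N,$$
which cleanly separates a smooth weight $r\,e^{h(r)}$, the integrable algebraic singularity $(\rho_{*}-r)^\alpha$ generated by the logarithmic perturbation, and the exponentially sharp weight $e^{-NV_\tau(r)}$.

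For $j\geq m_N$ one has $\tau-\tau_{*}\geq M/\sqrt{N}>0$, so by \eqref{ExV} the function $V_\tau$ is strictly decreasing at $\rho_{*}$ and $e^{-NV_\tau}$ attains its maximum on $[0,\rho_{*}]$ at the endpoint. The standard endpoint Laplace substitution $s = N(\rho_{*}-r)$ then gives $dr = -ds/N$, $(\rho_{*}-r)^\alpha=(s/N)^\alpha$, and from \eqref{ExV}
$$N\bigl(V_\tau(r)-V_\tau(\rho_{*})\bigr) = \frac{2(\tau-\tau_{*})}{\rho_{*}}\,s + O(s^2/N).$$
Replacing the smooth factors by their boundary values ($r\to\rho_{*}$, $h(r)\to h(\rho_{*})=0$), extending the $s$-integration to $[0,\infty)$ at an exponentially small cost, and evaluating the Gamma integral $\int_0^\infty s^\alpha e^{-as}\,ds=\Gamma(\alpha+1)/a^{\alpha+1}$ with $a = 2(\tau-\tau_{*})/\rho_{*}$ reproduces exactly the stated asymptotic.

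The main obstacle is uniformity of the $(1+o(1))$ factor across the full range $m_N\leq j\leq N-1$. When $j\approx m_N$ the parameter $a$ is only of order $1/\sqrt{N}$, so the effective Laplace window in the variable $s$ has width $\sim \sqrt{N}/M$ and the quadratic Taylor remainder contributes $O(s^2/N) = O(1/M^2)$ on typical $s$; this is precisely why $m_N$ is offset by $M\sqrt{N}$ and why $M$ must be chosen large. To execute the argument rigorously I would (i) use smoothness of $q$ near $\rho_{*}$ together with the strict subharmonicity of $Q$ to obtain a uniform two-sided bound $V_\tau(r)-V_\tau(\rho_{*})\asymp (\tau-\tau_{*})(\rho_{*}-r) + (\rho_{*}-r)^2$ on a fixed interval $[\rho_{*}-\delta,\rho_{*}]$, with implicit constants independent of $\tau\in[\tau_{*},1]$; (ii) show that the tail contribution from $r\in[0,\rho_{*}-\delta]$ is exponentially smaller than the endpoint contribution, uniformly in $j$; (iii) split the endpoint integral at an intermediate scale $s\sim \sqrt{N}/\sqrt{M}$ in order to control the $O(s^2/N)$ exponent correction and the smooth-factor replacements simultaneously, bundling all errors into a single $(1+o(1))$ factor uniform in $j$.
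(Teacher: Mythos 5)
Your proposal follows essentially the same route as the paper's proof: rewrite the norm as a radial integral against $e^{-NV_\tau}(\rho_*-r)^\alpha$, split off a boundary layer near $r=\rho_*$ (the paper uses width $\delta_N=N^{-1/2}\log N$), linearize $V_\tau$ there via \eqref{ExV} to produce the Gamma integral, and show the remaining bulk contribution is exponentially negligible using the monotonicity of $rq'(r)$ and the offset $j\geq N\tau_*+M\sqrt N$. Your remarks on the $O(s^2/N)=O(1/M^2)$ Taylor-remainder contribution near $j\approx m_N$ make explicit a uniformity subtlety that the paper handles more tersely, but the decomposition, the key substitution, and the tail estimate are the same.
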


\begin{proof}
For fixed $j$, we see that
\begin{equation}\label{nmint}
\|\zeta^j\|^{2}_{L^2(\mu_N)} 
= \int_{\rho_{*} -\delta_N}^{\rho_{*}} 2r^{2j+1}\,e^{-NQ_N(r)}\,dr + \int_{0}^{\rho_{*}-\delta_N}2r^{2j+1}\,e^{-NQ_N(r)} \,dr
\end{equation}
where $\delta_N$ denotes a small number $N^{-\frac{1}{2}}\log N$.
By \eqref{ExV}, the first integral in the right-hand side is calculated as
\begin{align*}
\int_{\rho_{*}-\delta_N}^{\rho_{*}} 2r(\rho_{*}-r)^{\alpha} e^{h(r)} e^{-NV_{\tau}(r)}\,dr 
&= 2\rho_{*} \, e^{-NV_{\tau}(\rho_{*})}\int_{0}^{\delta_N}t^{\alpha}e^{-\frac{2N}{\rho_{*}}\left(\tau-\tau_{*}\right)t}dt \, (1+o(1))\\ 
&= 2\rho_{*} \, \Gamma(\alpha+1) \,  e^{-NV_{\tau}(\rho_{*})} 
\Big(\frac{\rho_{*}}{2N\left(\tau-\tau_{*}\right)}\Big)^{\alpha+1} (1+o(1)),
\end{align*}
where $o(1) \to 0$ uniformly in $j$ as $N\to \infty$.

Now it suffices to show that the second integral in the right-hand side of \eqref{nmint} is negligible. We observe that for $\tau = j/N$ with $m_N \leq j \leq N-1$
$$V_{\tau}'(r) = \frac{1}{r}\left(rq'(r) - 2\tau\right) \leq \frac{1}{r}\Big( rq'(r)  - \frac{2m_N}{N}\Big).$$
Since $Q$ is subharmonic in $\C$, $rq'(r)$ is increasing in $r$ and 
$$V_{\tau}'(r) \leq \frac{1}{r}\Big( \rho_{*}\, q'(\rho_{*}) - 2\Big(\tau_{*} + \frac{M}{\sqrt{N}}\Big)  \Big) = -\frac{2M}{r\sqrt{N}},\quad r\leq \rho_{*}.$$
Thus, we see that for all $r$ with $r\leq \rho_{*}-\delta_N$
$$V_{\tau}(r)\geq V_{\tau}(\rho_{*}-\delta_N)$$
and by Taylor's theorem, we obtain the expansion
$$V_{\tau}(\rho_{*} -\delta_N) = V_{\tau}(\rho_{*}) - V_{\tau}'(r_{*})\,\delta_N$$ 
for some $r_{*} \in [\rho_{*}-\delta_N,\rho_{*}]$.
Hence, there exists a constant $c>0$ such that for all $j$
\begin{align*}
e^{NV_{\tau}(\rho_{*})}\int_{0}^{\rho_{*}-\delta_N} 2r\,(\rho_{*}-r)^{\alpha}e^{h(r)} e^{-NV_{\tau}(r)} dr 
&\leq \frac{2\rho_{*}^{\alpha+2}}{\alpha+1}\|e^h\|_{L^{\infty}} e^{-N(V_{\tau}(\rho_{*}-\delta_N)-V_{\tau}(\rho_{*}))}\\
&\leq C \,e^{-\frac{2M}{r_{*}}\log N}=O(N^{-c M}).
\end{align*} 
This completes the proof.
\end{proof}

A function $c(\zeta,\eta)$ is called a cocycle if $c(\zeta,\eta) = g(\zeta)\bar{g}(\eta)$ for a continuous unimodular function $g$. Note that a correlation kernel is only determined up to multiplication by a cocycle.

\begin{lem} \label{lem:ksp}
There exists a cocycle $c(z,w)$ such that 
\begin{equation*} 
K_N^{\#}(z,w)= c(z,w)(2\re z)^{\frac{\alpha}{2}}(2\re w)^{\frac{\alpha}{2}} \int_{0}^{1} \frac{t^{\alpha+1}}{\Gamma(\alpha+1)} e^{-t(z+\bar{w})} dt \cdot \1_{\RH}(z)\1_{\RH}(w) \,(1+o(1)),
\end{equation*}
where $o(1)\to 0$ uniformly in any compact subset of $\RH^2$ as $N\to \infty$.
\end{lem}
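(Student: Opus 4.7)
The plan is to plug the monomial orthonormal polynomials $p_{N,j}(\zeta)=\zeta^j/\|\zeta^j\|_{L^2(\mu_N)}$ into \eqref{Kshp}, insert the norm asymptotics from Lemma \ref{lem:nm1}, Taylor-expand the exponent $NV_\tau(|\zeta|)/2$ at $r=\rho_{*}$ via \eqref{ExV}, and recognize the resulting sum as a Riemann sum converging to the Laplace integral. Throughout, $(z,w)$ lies in a fixed compact $K\subset\RH\times\RH$. The key change of variables is $\tau=j/N$ and $t=(\tau-\tau_{*})/(1-\tau_{*})$, giving Riemann step $\Delta t = 1/(N(1-\tau_{*})) = \gamma_N/\rho_{*}$, with $t$ ranging over an interval that exhausts $(0,1)$ as $N\to\infty$.

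First, I would use $|\zeta|-\rho_{*} = -\gamma_N\re z + O(\gamma_N^2)$ and $\arg\zeta = -\gamma_N\im z/\rho_{*} + O(\gamma_N^2)$ together with \eqref{ExV} to obtain
$$\tfrac{N}{2}V_\tau(|\zeta|) = \tfrac{N}{2}V_\tau(\rho_{*}) + t\,\re z + O(1/N),$$
which cancels the factor $e^{-NV_\tau(\rho_{*})}$ in the denominator supplied by Lemma \ref{lem:nm1}. The remaining pieces from $Q_N$---namely the logarithmic weight $(\rho_{*}-|\zeta|)^{\alpha/2}=(\gamma_N\re z)^{\alpha/2}(1+o(1))$ and the perturbation $e^{h(\zeta)}=1+o(1)$ (using $h(\rho_{*})=0$)---combine with the normalization's $(\gamma_N/(2t))^{-(\alpha+1)}$ and the $\gamma_N^2$ prefactor so that all powers of $\gamma_N$ balance and a single factor $\gamma_N/\rho_{*}=\Delta t$ remains, serving as the Riemann step.

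Second, I would split the phase $j(\arg\zeta-\arg\eta) = -\bigl(\tfrac{\tau_{*}}{1-\tau_{*}}+t\bigr)(\im z-\im w)+o(1)$ into a cocycle $c(z,w) = g(z)\bar{g}(w)$ with $g(z)=e^{-i\tau_{*}\im z/(1-\tau_{*})}$ and a $t$-dependent piece $e^{-it(\im z-\im w)}$; the latter merges with $e^{-t(\re z+\re w)}$ into $e^{-t(z+\bar w)}$. What remains is the Riemann sum
$$(2\re z)^{\alpha/2}(2\re w)^{\alpha/2}\sum_{j=m_N}^{N-1}\frac{t^{\alpha+1}}{\Gamma(\alpha+1)}\,e^{-t(z+\bar w)}\,\Delta t,$$
which converges to the advertised integral on $(0,1)$; the missing left-tail $t\in[0,M/(\sqrt{N}(1-\tau_{*}))]$ contributes $o(1)$ since $\alpha>-1$ makes $t^{\alpha+1}$ integrable at $0$.

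The main obstacle is uniformity on compacta. Lemma \ref{lem:nm1} already provides $o(1)$ uniform in $j$, but my Taylor expansion of $V_\tau$ must also be uniform in $j\in[m_N,N-1]$ (secured by the boundedness of $V_\tau''(r)=q''(r)+2\tau/r^2$ on $[\rho_{*}-\delta_N,\rho_{*}]\times[\tau_{*},1]$) and in $(z,w)\in K^2$ (so that $\re z,\re w$ stay bounded away from $0$ and $\infty$, rendering the factors $(\re z)^{\alpha/2}$ harmless). Convergence of the Riemann sum to the integral is uniform on $K^2$ by standard trapezoid-type estimates on $[\epsilon,1]\times K^2$ together with the bound $O(\epsilon^{\alpha+2})$ on the $[0,\epsilon]$ piece. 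No individual step is deep; the delicate point is correctly balancing all the powers of $\gamma_N$ and propagating the $o(1)$ errors uniformly through the accumulated approximations.
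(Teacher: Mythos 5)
Your proposal is correct and follows essentially the same route as the paper's proof: insert the norm asymptotics of Lemma \ref{lem:nm1}, Taylor-expand the exponent via \eqref{ExV} to produce the $e^{-t(z+\bar w)}$ factor and the cocycle $\exp(-\tfrac{\tau_{*}}{1-\tau_{*}}i\im(z-w))$, and recognize a Riemann sum converging to the Laplace integral. The only differences are cosmetic (you work directly with $t=(\tau-\tau_{*})/(1-\tau_{*})$ and step $\gamma_N/\rho_*$, while the paper sums with step $N^{-1}$ in $\tau$ and substitutes afterwards) plus a slightly more explicit treatment of the $O(N^{-1/2})$ left tail, which the paper leaves implicit.
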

\begin{proof}
Let $\calK$ be a compact subset of $\RH$. For all $\tau=j/N$ with $m_N \leq j \leq N-1$, we have
\begin{align}\label{exQ}
Q(\zeta)-2\tau\log \zeta 
&= V_{\tau}(\rho_{*})-\frac{2\left(\tau_{*}-\tau \right)}{N(1-\tau_{*})}\re z + \frac{2\tau}{N(1-\tau_{*})} i\im z + O(N^{-2})
\end{align}
for $\zeta = \rho_{*} - \gamma_N z$,
where the $O$-constant is uniform for $z \in \calK$. Also we observe that 
\begin{align*}
(\rho_{*} - |\rho_{*} - \gamma_N z|)^{\alpha} e^{h(\zeta)} = (\gamma_N \re z)^{\alpha}(1+o(1)),\quad \zeta = \rho_{*} - \gamma_N z.
\end{align*}
For each $\tau$ and $z$, we write 
$$f_{\tau}(z) = (\tau-\tau_{*})^{\frac{\alpha+1}{2}} \exp\Big( - \frac{\tau-\tau_{*}}{1-\tau_{*}} \,z \Big).$$
Combining with Lemma \ref{lem:nm1}, we obtain
\begin{align*}
\frac{\zeta^j \, e^{-\frac{N}{2}Q_{N}(\zeta)}}{\|\zeta^j\|_{L^2(\mu_N)}}\, 
= \frac{1}{\rho_{*}} \Big(\frac{N}{\Gamma(\alpha+1)}\Big)^{\frac{1}{2}} \frac{(2\re z)^{\frac{\alpha}{2}}}{(1-\tau_{*})^{\frac{\alpha}{2}}} e^{-\frac{\tau_{*}}{1-\tau_{*}}i \im z} f_\tau(z) \,\1_{\calH_N}(z)\,(1+o(1))
\end{align*}
where 
$\calH_N=\{z\in\C: \rho_{*} -\gamma_N z \in \calD \}$,
and $o(1)\to 0$ uniformly for $z\in\calK$ and $j$ with $m_N \leq j \leq N-1$ as $N\to \infty$.
Note that $\calH_N$ converges to $\RH$ as $N\to \infty$.
Write $$c(z,w) = \exp\Big(-\frac{\tau_{*}}{1-\tau_{*}} i \im (z-w)\Big),$$ which is a cocycle. 
Using the Riemann sum approximation with step length $N^{-1}$, we have
\begin{align*}
K_N^{\#}(z,w) 
&=\gamma_N^2 \sum_{j \geq m_N}^{N-1} \frac{(\zeta\bar{\eta})^j \, e^{-\frac{N}{2}(Q_N(\zeta)+Q_N(\eta))}}{\|\zeta^j\|_{L^2(\mu_N)}^2} \cdot \1_{\calD}(\zeta)\1_{\calD}(\eta)\\
&= \frac{(2\re z)^{\frac{\alpha}{2}}(2\re w)^{\frac{\alpha}{2}}}{\Gamma(\alpha+1)(1-\tau_{*})^{\alpha+2}} \,c(z,w)\, \1_{\RH}(z)\1_{\RH}(w)\, N^{-1}\!\!\sum_{j\geq m_N}^{N-1} f_{\tau}(z) f_{\tau}(\bar{w})\,(1+o(1))\\
&=\frac{(2\re z)^{\frac{\alpha}{2}}(2\re w)^{\frac{\alpha}{2}}}{\Gamma(\alpha+1)(1-\tau_{*})^{\alpha+2}} \,c(z,w)\, \1_{\RH}(z)\1_{\RH}(w) \int_{0}^{1-\tau_{*}} \!\!\!s^{\alpha+1} e^{-\frac{s}{1-\tau_{*}}(z+\bar{w})} ds\,(1+o(1)), 
\end{align*}
where $o(1)\to 0$ locally uniformly in $\RH^2$ as $N\to \infty$. By the change of variable $s=(1-\tau_{*})t$, we obtain the convergence
$$K_N^{\#}(z,w) = \frac{(2\re z)^{\frac{\alpha}{2}}(2\re w)^{\frac{\alpha}{2}}}{\Gamma(\alpha+1)} \,c(z,w)\, \1_{\RH}(z)\1_{\RH}(w)\int_{0}^{1} t^{\alpha+1}\, e^{-t(z+\bar{w})}dt\, (1+o(1))$$
as $N\to \infty$.
\end{proof}

\subsection{Discarding the lower degree polynomials}\label{ssec:l}
This subsection provides some estimates for the weighted orthonormal polynomials of degree $j$ with $0\leq j< m_N$, and we shall prove that the weighted polynomials of lower degree can be neglected in the sum \eqref{corrk}. 

First we consider the case when $N(\tau_{*}-\delta_N)\leq j\leq m_N$. This means that for $\tau = j/N$,
\begin{equation*}\label{tau_r}
\tau_{*}-\delta_N \leq \tau \leq \tau_{*} + M/\sqrt{N}.
\end{equation*}
Recall that $\rho_{*}q'(\rho_{*})=2\tau_{*}$ and $rq'(r)$ is increasing. Since $Q$ is strictly subharmonic near $|\zeta|=\rho_{*}$, for each $\tau$ in this region \eqref{tau_r} there exists $\rho_\tau$ in a neighborhood of $\rho_{*}$ such that $\rho_{\tau}q'(\rho_\tau)=2\tau$ and $\Lap Q(\rho_\tau)>\epsilon$ for some positive number $\epsilon$ which is independent of $\tau$.
Then we observe that  $V_{\tau}'(\rho_\tau)=0$ and $V_{\tau}''(\rho_{\tau})=4\Lap Q(\rho_{\tau})$ and obtain that   
\begin{equation}\label{crtexp}
V_{\tau}(r) = V_{\tau}(\rho_\tau) + {2\Lap Q(\rho_\tau)}(r-\rho_\tau)^2 +O(|r-\rho_\tau|)^3, \quad r\to \rho_\tau.
\end{equation}  
We also observe that the following asymptotic expansion holds:
$$2(\tau_{*}-\tau)=\rho_{*} q'(\rho_{*})- \rho_\tau q'(\rho_\tau)=4\rho_{*} \Lap Q(\rho_{*})(\rho_{*}-\rho_\tau)+O(|\rho_{*}-\rho_\tau|)^2,\quad \rho_\tau \to \rho_{*},$$ 
which implies the asymptotic relation
\begin{equation}\label{asymrtau}
\rho_{*}-\rho_{\tau} = \frac{\tau_{*}-\tau}{2\rho_{*} \Lap Q(\rho_{*})}+O(|\tau_{*}-\tau|)^2,\quad \tau \to \tau_{*}.
\end{equation}
The following lemma gives an estimate of the norm $\|\zeta^j\|_{L^2(\mu_N)}$. For the statement, we write 
\begin{equation}\label{phixi}
\varphi_{\alpha}(\xi) = \int_{0}^{\infty} s^{\alpha}e^{-\frac{1}{2}(s-\xi)^2}\,ds\quad \mbox{and}\quad \xi_{\tau} = \frac{\sqrt{N}(\tau_{*}-\tau)}{\rho_{*}\sqrt{\Lap Q(\rho_{*})}}.
\end{equation}

\begin{lem} \label{lem:nm2}
Fix $j$ with $N(\tau_{*}-\delta_N)\leq j\leq m_N$.  
Then we obtain
$$\|\zeta^j\|^2_{L^2(\mu_N)} = C N^{-\frac{\alpha+1}{2}} e^{-NV_{\tau}(\rho_\tau)}\varphi_{\alpha}(\xi_{\tau})(1+o(1)),\quad N\to \infty,$$
where the constant $C$ is given by $C= 2\rho_{*} (4\Lap Q(\rho_{*}))^{-\frac{\alpha+1}{2}}$.
\end{lem}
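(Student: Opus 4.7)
The plan is to evaluate
\begin{equation*}
\|\zeta^j\|^2_{L^2(\mu_N)} = \int_0^{\rho_{*}} 2r\,(\rho_{*}-r)^\alpha\, e^{h(r)}\, e^{-NV_{\tau}(r)}\, dr
\end{equation*}
by Laplace's method around the minimizer $\rho_\tau$ of $V_{\tau}$. By \eqref{asymrtau}, throughout the range $\tau_{*}-\delta_N \leq \tau \leq \tau_{*} + M/\sqrt{N}$ the critical point $\rho_\tau$ lies within $O(\delta_N)$ of the hard wall $\rho_{*}$, so the Gaussian profile of $e^{-NV_{\tau}}$ is truncated by the boundary $r=\rho_{*}$ on a nontrivial scale; this truncation is exactly what is recorded by the factor $\varphi_\alpha(\xi_\tau)$.

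First I would split the integral at $r = \rho_{*}-\delta_N$. The far piece on $[0,\rho_{*}-\delta_N]$ is exponentially negligible compared to the main term: since $V_{\tau}$ is convex near $\rho_\tau$ and $\rho_\tau$ is at distance $O(\delta_N)$ of $\rho_{*}$, a direct estimate (analogous to the one in the proof of Lemma~\ref{lem:nm1}) gives a uniform lower bound on $V_{\tau}(r)-V_{\tau}(\rho_\tau)$ for $r\leq\rho_{*}-\delta_N$ that produces an $O(N^{-c\log N})$ factor relative to the main contribution.

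For the inner piece on $[\rho_{*}-\delta_N,\rho_{*}]$, I would expand
\begin{equation*}
NV_{\tau}(r) = NV_{\tau}(\rho_{*}) - \tfrac{2N(\tau_{*}-\tau)}{\rho_{*}}(\rho_{*}-r) + 2N\Lap Q(\rho_{*})(\rho_{*}-r)^2 + O(N(\rho_{*}-r)^3),
\end{equation*}
using $V_{\tau}'(\rho_{*})=2(\tau_{*}-\tau)/\rho_{*}$ from \eqref{ExV} and $V_{\tau}''(\rho_{*}) = 4\Lap Q(\rho_{*}) + O(\tau-\tau_{*})$. Completing the square in $t=\rho_{*}-r$ and invoking \eqref{asymrtau} to identify the constant shift $NV_{\tau}(\rho_{*}) - \tfrac{1}{2}\xi_\tau^2$ with $NV_{\tau}(\rho_\tau) + o(1)$, then substituting $s = 2\sqrt{N\Lap Q(\rho_{*})}\,t$ and replacing $2r\,e^{h(r)}$ by $2\rho_{*}\,(1+o(1))$, the inner piece becomes
\begin{equation*}
\frac{2\rho_{*}\,e^{-NV_{\tau}(\rho_\tau)}}{(4N\Lap Q(\rho_{*}))^{(\alpha+1)/2}}\int_0^{S_N}\! s^\alpha\, e^{-(s-\xi_\tau)^2/2}\,ds\,(1+o(1)),
\end{equation*}
with $S_N = 2\sqrt{\Lap Q(\rho_{*})}\,\log N$. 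The cubic error is uniformly $O(N\delta_N^3) = O(\log^3 N/\sqrt{N}) = o(1)$. The claimed asymptotic then follows once $\int_0^{S_N} s^\alpha e^{-(s-\xi_\tau)^2/2}ds = \varphi_\alpha(\xi_\tau)(1+o(1))$ is established uniformly in $j$, giving the constant $C = 2\rho_{*}(4\Lap Q(\rho_{*}))^{-(\alpha+1)/2}$.

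The hard part is this last uniform convergence. The range of $\xi_\tau$ extends up to order $\log N$ when $\tau$ approaches $\tau_{*}-\delta_N$, which is of the same order as $S_N$. When $\xi_\tau$ is close to $S_N$, the peak of the Gaussian integrand nearly reaches the truncation point, so one must either enlarge the split scale beyond $\delta_N$ (arranging $S_N-\xi_\tau\to\infty$ so that standard Gaussian tail bounds apply uniformly) or handle the near-peak regime directly; in either case the tail $\int_{S_N}^\infty s^\alpha e^{-(s-\xi_\tau)^2/2}\,ds$ has to be controlled as $o(\varphi_\alpha(\xi_\tau))$ uniformly in the admissible $\xi_\tau$.
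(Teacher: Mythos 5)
Your overall strategy---Laplace's method centered on the minimizer $\rho_\tau$ of $V_\tau$, producing the truncated Gaussian $\varphi_\alpha(\xi_\tau)$---is the right one and matches the paper's in spirit, but your choice of split point at $r=\rho_*-\delta_N$ creates a genuine gap that you only half-diagnose at the end. The problem actually appears earlier than where you flag it: your claim that $V_\tau(r)-V_\tau(\rho_\tau)$ admits a uniform lower bound for $r\le\rho_*-\delta_N$ is false over the full range of $j$. By \eqref{asymrtau}, when $\tau=\tau_*-\delta_N$ one has $\rho_*-\rho_\tau\approx\delta_N/(2\rho_*\Lap Q(\rho_*))$, which exceeds $\delta_N$ whenever $2\rho_*\Lap Q(\rho_*)<1$. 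In that regime the minimizer $\rho_\tau$ lies \emph{inside} the far piece $[0,\rho_*-\delta_N]$, where $V_\tau(r)-V_\tau(\rho_\tau)$ vanishes, so the far piece is not negligible---in fact it can carry the dominant Gaussian mass. This is the same phenomenon you describe in terms of $\xi_\tau$ crossing $S_N$, but it shows your far-piece estimate breaks before one even gets to the tail bound for $\varphi_\alpha$.

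The paper sidesteps this entirely by splitting at $r=\rho_\tau-\delta_N$ rather than $r=\rho_*-\delta_N$---a distance $\delta_N$ from the peak, not from the wall. Then the far piece $[0,\rho_\tau-\delta_N]$ is genuinely exponentially small uniformly in $j$ (since $V_\tau$ is decreasing on $(\rho_0,\rho_\tau)$ and $V_\tau(\rho_\tau-\delta_N)-V_\tau(\rho_\tau)\gtrsim\delta_N^2$ by \eqref{crtexp} and strict subharmonicity), while the inner piece $[\rho_\tau-\delta_N,\rho_*]$ always contains the entire left half of the Gaussian bump, with the right half truncated exactly at the wall. After the substitution $s=\sqrt{4N\Lap Q(\rho_*)}\,(r-\rho_\tau)$ the lower endpoint of $s$ tends to $-\infty$ and the upper endpoint is $\xi_\tau(1+o(1))$ by \eqref{asymrtau}, so the integral converges to $\varphi_\alpha(\xi_\tau)$ with no uniformity issue in the tail. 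Your proposed remedy of enlarging the split scale to $K\delta_N$ with $K>1/(2\rho_*\Lap Q(\rho_*))$ would also work, but centering the split on $\rho_\tau$ is cleaner and avoids tracking a $Q$-dependent constant.
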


\begin{proof}
For all $j$ with $N(\tau_{*}-\delta_N)\leq j \leq m_N$, 
\begin{equation}\label{rhorho}
-c_1 M\cdot N^{-1/2} \leq \rho_{*} - \rho_{\tau} \leq c_2 \,\delta_N
\end{equation} 
where $c_1$ and $c_2$ are positive constants by \eqref{asymrtau}.
Similar to the proof of Lemma \ref{lem:nm1}, we derive the estimate
\begin{equation}\label{eq:nm2}
\|\zeta^j\|^2_{L^2(\mu_N)} = 2\rho_{*} \,e^{-NV_{\tau}(\rho_\tau)} \int_{\rho_{\tau}-\delta_N}^{\rho_{*}} (\rho_{*} - r)^{\alpha} e^{-2N\Lap Q(\rho_{*})(r-\rho_\tau)^2}dr\,(1+o(1))\end{equation}
as $N\to \infty$
by using the expansion \eqref{crtexp}.  
By changing the variable $s=\sqrt{4N\Lap Q(\rho_{*})}(r-\rho_\tau)$ in \eqref{eq:nm2}, we obtain
\begin{align*}
\|\zeta^j\|^2_{L^2(\mu_N)} = 2\rho_{*} \, e^{-NV_{\tau}(\rho_\tau)}(4N\Lap Q(\rho_{*}))^{-\frac{\alpha+1}{2}}\varphi_{\alpha}(\xi_{\tau}) \,(1+o(1)),\quad N\to \infty,
\end{align*}
which completes the proof.
\end{proof}

\begin{lem} \label{lem:sum2}
Let $\calK$ be a compact subset of $\RH$. We obtain for $\zeta=\rho_{*} - \gamma_N z$
\begin{equation*}
\gamma_N^2\sum_{N(\tau_{*}-\delta_N)}^{m_N} |p_{N,j}(\zeta)|^2 e^{-NQ_N^{\calD}(\zeta)}=O(N^{-(1+\frac{\alpha}{2})}\log N), \quad N\to \infty,
\end{equation*}
where the $O$-constant is uniform for $z\in \calK$.
\end{lem}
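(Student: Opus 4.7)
My plan is to substitute the norm asymptotic from Lemma \ref{lem:nm2} into each weighted monomial, pass to the rescaled variable $\zeta=\rho_{*}-\gamma_N z$, and then exploit the fact that only $O(\sqrt{N}\log N)$ indices contribute, each with a summand of controlled size.

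Since $p_{N,j}(\zeta)=\zeta^j/\|\zeta^j\|_{L^2(\mu_N)}$, for $\tau=j/N$ each summand equals $(\rho_{*}-|\zeta|)^\alpha e^{h(\zeta)} e^{-NV_\tau(|\zeta|)}/\|\zeta^j\|_{L^2(\mu_N)}^2$. I will plug in Lemma \ref{lem:nm2} and apply the quadratic expansion \eqref{crtexp} of $V_\tau$ around $\rho_\tau$; the cubic remainder contributes $O(N\delta_N^3)=o(1)$ in the exponent since $|\zeta|-\rho_\tau=O(\delta_N)$ throughout this range of $j$. Setting $s=\sqrt{4N\Lap Q(\rho_{*})}\,(\rho_{*}-|\zeta|)$ and invoking \eqref{asymrtau} in the form $\sqrt{4N\Lap Q(\rho_{*})}\,(\rho_{*}-\rho_\tau)=\xi_\tau+o(1)$, these steps together will rewrite each summand as
$$\frac{\sqrt{N\Lap Q(\rho_{*})}}{\rho_{*}}\cdot\frac{s^\alpha\,e^{-(s-\xi_\tau)^2/2}}{\varphi_\alpha(\xi_\tau)}\,(1+o(1)),$$
uniformly for $j$ in the given range and for $z$ in a fixed compact set $\calK\subset\RH$.

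On $\calK$ one has $s=O(N^{-1/2})$, so $s^\alpha=O(N^{-\alpha/2})$, and the task reduces to a uniform upper bound on $\Psi(\xi):=e^{-(s-\xi)^2/2}/\varphi_\alpha(\xi)$ for $\xi\in[-cM,c\log N]$, the range of $\xi_\tau$ corresponding to $j\in[N(\tau_{*}-\delta_N),m_N]$. The main obstacle will be the lower bound on $\varphi_\alpha$ at large positive $\xi$: for $\xi$ in any bounded subinterval, continuity and positivity of $\varphi_\alpha$ make $\Psi$ trivially bounded; for $\xi\geq 1$ I will use
$$\varphi_\alpha(\xi)\geq \int_\xi^{\xi+1} t^\alpha e^{-(t-\xi)^2/2}\,dt\geq c_\alpha \xi^\alpha,$$
which combined with $s=o(1)$ gives $\Psi(\xi)\leq C\xi^{-\alpha}e^{-\xi^2/4}=O(1)$. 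Granted this bound, each summand is $O(N^{(1-\alpha)/2})$, the number of indices is $N\delta_N+O(\sqrt{N})=O(\sqrt{N}\log N)$, and multiplying by $\gamma_N^2=O(N^{-2})$ yields the desired bound $O(N^{-1-\alpha/2}\log N)$ uniformly for $z\in\calK$.
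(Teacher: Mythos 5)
Your proposal is correct and follows essentially the same route as the paper: substitute Lemma~\ref{lem:nm2}, expand $V_\tau$ via \eqref{crtexp} and \eqref{asymrtau} so the exponent becomes $-(s-\xi_\tau)^2/2$ (the paper writes $-\xi_\tau^2/2$, which differs only by the $o(1)$ coming from $s=O(N^{-1/2})$ and $|\xi_\tau|=O(\log N)$), bound the ratio $e^{-\xi_\tau^2/2}/\varphi_\alpha(\xi_\tau)$ uniformly over the relevant $\xi$-range, and then multiply the per-term size $O(N^{(1-\alpha)/2})$ by $\gamma_N^2$ and the $O(\sqrt{N}\log N)$ indices. The only stylistic difference is that the paper records the bound on the ratio as $C_1 M^{\alpha+1}$ (reflecting the growth of $e^{-\xi^2/2}/\varphi_\alpha(\xi)$ as $\xi\to -cM$), whereas you absorb this into an $O(1)$ constant depending on the fixed $M$ and supply the extra lower bound $\varphi_\alpha(\xi)\gtrsim \xi^\alpha$ for $\xi\geq 1$, which is a fine way to see uniform boundedness but not a new idea.
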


\begin{proof}
Observe that for each $j$ 
\begin{align*}
|p_{N,j}(\zeta)|^2 e^{-NQ_N^{\calD}(\zeta)} 
=\frac{N^{\frac{\alpha+1}{2}}}{C\varphi_{\alpha}(\xi_{\tau})}|\zeta|^{2j} e^{-N (Q_N(\zeta) -V_{\tau}(\rho_\tau))}\cdot \1_{\calD}(\zeta)\,(1+o(1))
\end{align*}
by Lemma \ref{lem:nm2}. Here, $\varphi_{\alpha}$ and $\xi_\tau$ are defined in \eqref{phixi} before Lemma \ref{lem:nm2}.
Using the expansions \eqref{crtexp} and \eqref{asymrtau}, we obtain
\begin{align*}
|p_{N,j}(\zeta)|^2 e^{-NQ_N^{\calD}(\zeta)}
&=\frac{N^{\frac{\alpha+1}{2}}(\gamma_N \re z)^{\alpha}}{C\varphi_{\alpha}(\xi_{\tau})} e^{-2N\Lap Q(\rho_\tau)(\rho_{*}-\rho_\tau - \gamma_N \re z)^2}\cdot\1_{\RH}(z)\,(1+o(1))\\
&=C'N^{\frac{1-\alpha}{2}} (\re z)^{\alpha} \frac{e^{-\frac{1}{2}\xi_{\tau}^2}}{\varphi_{\alpha}(\xi_{\tau})}\cdot\1_{\RH}(z)\,(1+o(1))
\end{align*}
for some constants $C$ and $C'$ which are independent of $j$.
Here 
$o(1)\to 0$ uniformly for $z\in \calK$ as $N\to \infty$. 
We obtain a bound 
$$\frac{e^{-\frac{1}{2}{\xi_{\tau}^2}}}{\varphi_{\alpha}(\xi_{\tau})}\leq C_1 M^{\alpha+1}$$
for some constant $C_1$. This implies that there exists a constant $C_2>0$ such that for all $z\in \calK$ and sufficiently large $N$
\begin{align*}
\gamma_N^2 \sum_{N(\tau_{*}-\delta_N)}^{m_N} |p_{N,j}(\zeta)|^2 e^{-NQ_N^{\calD}(\zeta)}\leq C_2 N^{-\frac{\alpha+1}{2}}\delta_N, 
\end{align*}
which completes the proof.
\end{proof}

For the lower degree polynomials with $j \leq N(\tau_{*}-\delta_N)$, the localization of the potential $Q$ to $\calD = D(0,\rho_{*})$ does not affect the asymptotics of weighted polynomials that much because the weighted orthogonal polynomial $|p_{N,j}|^2 e^{-NQ_N}$ tends to decay exponentially outside the set $D(0,\rho_{\tau})
$ which is at least $O(\delta_N)$-distance away from the hard wall $\d \calD$. The sum of lower degree terms with $j \leq N(\tau_{*}-\delta_N)$ can be estimated as follows:


\begin{lem}\label{lem:sum3}
Let $\calK$ be a compact subset of $\RH$. We obtain for $\zeta = \rho_{*} - \gamma_N z$
$$\gamma_N^2 \sum_{j=0}^{N(\tau_{*}-\delta_N)} 
|p_{N,j}(\zeta)|^2 e^{-NQ_{N}^{\calD}(\zeta)}
= O(e^{-c(\log N)^2}),\quad N\to \infty,$$
where $c$ is a positive constant and the $O$-constant is uniform for $z\in\calK$.
\end{lem}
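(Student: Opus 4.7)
The plan is to show that for each $j$ with $\tau = j/N \leq \tau_{*} - \delta_N$, the weighted polynomial $|p_{N,j}(\zeta)|^2 e^{-NQ_N^{\calD}(\zeta)}$ is of order $N^{\kappa}e^{-c(\log N)^2}$ at $\zeta = \rho_{*} - \gamma_N z$ uniformly in $z \in \calK$, and then to sum the at most $N$ contributions. The underlying mechanism is that $|\zeta|^{2j}e^{-NQ(\zeta)} = e^{-NV_{\tau}(|\zeta|)}$ concentrates on the critical circle $|\zeta|=\rho_\tau$, which by \eqref{asymrtau} lies at distance $\rho_{*} - \rho_\tau \geq c\delta_N$ from the hard wall.

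\textbf{Step 1: Lower bound on the norm.} I adapt the Laplace computation in Lemma \ref{lem:nm2} to the present range of $\tau$. Restricting the polar integral for $\|\zeta^j\|^2_{L^2(\mu_N)}$ to a symmetric neighborhood of $\rho_\tau$ of length $\sim N^{-1/2}$, using the quadratic expansion \eqref{crtexp} and noting that $(\rho_{*} - r)^\alpha e^{h(r)}$ is comparable to $(\rho_{*} - \rho_\tau)^\alpha$ on this neighborhood (since $\rho_{*} - \rho_\tau \gtrsim \delta_N \gg N^{-1/2}$), I obtain a uniform lower bound
$$\|\zeta^j\|^2_{L^2(\mu_N)} \geq c\, N^{-1/2}(\rho_{*} - \rho_\tau)^\alpha e^{-NV_{\tau}(\rho_\tau)}.$$

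\textbf{Step 2: Lower bound on the gap $V_{\tau}(|\zeta|) - V_{\tau}(\rho_\tau)$.} At $\zeta = \rho_{*} - \gamma_N z$ the difference $|\zeta| - \rho_\tau$ is at least $(\rho_{*} - \rho_\tau)/2$ for $N$ large, because $\rho_{*} - \rho_\tau \gtrsim \delta_N \gg \gamma_N$. I split into cases. When $\tau \in [\tau_{*} - c_0, \tau_{*} - \delta_N]$ for a fixed small $c_0>0$, strict subharmonicity of $Q$ near $\d \calD$ yields $V_{\tau}'' \geq c'>0$ uniformly on $[\rho_\tau, \rho_{*}]$, so the quadratic estimate and \eqref{asymrtau} give
$$V_{\tau}(|\zeta|) - V_{\tau}(\rho_\tau) \geq c''(\rho_{*}-\rho_\tau)^2 \geq c'''\delta_N^2 = c'''(\log N)^2/N.$$
When $\tau < \tau_{*} - c_0$, the identity $V_{\tau}'(\rho_{*}) = 2(\tau_{*}-\tau)/\rho_{*} \geq 2c_0/\rho_{*}$ combined with boundedness of $V_{\tau}''$ gives $V_{\tau}(|\zeta|) - V_{\tau}(\rho_\tau) \geq c'''' > 0$, which produces the much stronger $e^{-cN}$ decay.

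\textbf{Step 3: Assembly.} Writing $|p_{N,j}(\zeta)|^2 e^{-NQ_N^{\calD}(\zeta)} = |\zeta|^{2j}e^{-NQ_N(\zeta)}/\|\zeta^j\|^2_{L^2(\mu_N)}$ and using
$$|\zeta|^{2j}e^{-NQ_N(\zeta)} = (\rho_{*}-|\zeta|)^\alpha e^{h(\zeta)} e^{-NV_{\tau}(|\zeta|)} \leq C\gamma_N^\alpha e^{-NV_{\tau}(|\zeta|)},$$
the ratio of numerator and denominator is bounded by $C N^{\kappa} e^{-c(\log N)^2}$ uniformly in $j$ and in $z \in \calK$, with $\kappa$ depending on $\alpha$. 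Summing at most $N$ such terms and multiplying by $\gamma_N^2 = O(N^{-2})$ yields the claim, since any polynomial factor is absorbed by the stretched exponential.

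\textbf{Main obstacle.} The delicate point is uniformity in $\tau$: as $\tau \to \tau_{*}$ from below, the critical point $\rho_\tau$ approaches $\rho_{*}$ and I need the lower bound $\rho_{*} - \rho_\tau \gtrsim \delta_N$ coupled with a uniform quadratic lower bound on $V_{\tau}$ near its minimum. Both are controlled by the strict subharmonicity of $Q$ near $\d \calD$, which guarantees a uniform positive lower bound on $V_{\tau}''$ close to $\rho_{*}$ together with the linear matching \eqref{asymrtau}. Calibrating these constants so that $c$ in $e^{-c(\log N)^2}$ is independent of $\tau$ is what makes the argument produce exactly the $(\log N)^2$ in the exponent.
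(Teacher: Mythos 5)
Your overall strategy---lower-bound $\|\zeta^j\|^2_{L^2(\mu_N)}$ by a Laplace argument, upper-bound the weighted monomial at $|\zeta|$, and conclude that the ratio decays super-polynomially---is the same as the paper's, and your proof is essentially correct. But the decomposition differs, and the difference is worth noting. You center the Laplace window at the $\tau$-dependent critical point $\rho_\tau$ and then compare $V_\tau(|\zeta|)$ against $V_\tau(\rho_\tau)$, which forces a case split in Step~2 between $\tau$ near $\tau_*$ and $\tau$ far from $\tau_*$. The paper instead integrates over a single fixed reference interval $[\rho_*^{(0)},\rho_*^{(1)}]=[\rho_{\tau_*-\delta_N},\,\rho_{\tau_*-\delta_N/2}]$, observes that $V_\tau(r)\le V_\tau(\rho_*^{(1)})$ there for every $\tau\le\tau_*-\delta_N$ (monotonicity of $V_\tau$ past $\rho_\tau$), giving $\|\zeta^j\|^2\ge C_1\delta_N^{\alpha+1}e^{-NV_\tau(\rho_*^{(1)})}$, and then uses the identity
\[
V_\tau(r)-V_\tau(\rho_*^{(1)})=V_{\tau_*^{(1)}}(r)-V_{\tau_*^{(1)}}(\rho_*^{(1)})+2(\tau-\tau_*^{(1)})\log\frac{\rho_*^{(1)}}{r},
\]
whose second term is nonnegative for $\tau\le\tau_*^{(1)}$ and $r>\rho_*^{(1)}$. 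That single inequality handles all $\tau$ at once, so no case split is needed; this is what the fixed reference point buys.

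There is one genuine (though patchable) gap in your Step~1: you invoke the quadratic expansion \eqref{crtexp} around $\rho_\tau$, but \eqref{crtexp} is established in the paper only for $\tau$ such that $\rho_\tau$ lies in the neighborhood of $\rho_*$ where $Q$ is strictly subharmonic, and the paper's standing hypotheses (subharmonicity plus absolutely continuous $q'$) do not bound $\Delta Q$ from above or below away from $\partial\calD$. So for $\tau<\tau_*-c_0$ the lower bound $\|\zeta^j\|^2\gtrsim N^{-1/2}(\rho_*-\rho_\tau)^\alpha e^{-NV_\tau(\rho_\tau)}$ is not justified by the argument you give. The fix is cheap: for those $j$ you already get, in Step~2, a gap $V_\tau(|\zeta|)-V_\tau(\rho_\tau)\ge c>0$, so a much cruder lower bound on the norm (for instance, integrating over the $\tau$-independent interval $[\rho_*^{(0)},\rho_*^{(1)}]$ as the paper does, or any subexponential bound) would suffice to conclude $e^{-cN}$ decay there. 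You should either state that fix explicitly or restrict the moving-window computation to the regime $\tau\in[\tau_*-c_0,\tau_*-\delta_N]$ where \eqref{crtexp} and \eqref{asymrtau} are actually available.

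Two smaller points. In the case $\tau<\tau_*-c_0$ you assert $V_\tau(|\zeta|)-V_\tau(\rho_\tau)\ge c''''>0$ from $V_\tau'(\rho_*)\ge 2c_0/\rho_*$ and boundedness of $V_\tau''$; this is correct but needs a sentence: the derivative bound forces $V_\tau'\ge c_0/\rho_*$ on an interval of fixed length $\ell$ to the left of $\rho_*$, so $V_\tau(\rho_*)-V_\tau(\rho_*-\ell)\ge c_0\ell/\rho_*$, and since $\rho_\tau\le\rho_*-\ell$ and $V_\tau$ is nondecreasing past $\rho_\tau$, while $|\zeta|=\rho_*-O(\gamma_N)$, the claim follows. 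Also, strictly speaking the boundedness of $V_\tau''$ is not free of the same regularity caveat; but again, for these $\tau$ only a crude bound is needed. Finally, for very small $j$ the two-sided window around $\rho_\tau$ may protrude below $\rho_0$ (or below $0$ if $\rho_0=0$), so one should use a one-sided window there; this is cosmetic.
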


\begin{proof}

Write $\tau_{*}^{(0)}=\tau_{*}-\delta_N$ and $\tau_{*}^{(1)}=\tau_{*}-\frac{1}{2}\delta_N$. We also write $\rho_{*}^{(0)}=\rho_{\tau_{*}^{(0)}}$ and 
$\rho_{*}^{(1)}=\rho_{\tau_{*}^{(1)}}$ for simplicity.
We first observe that by \eqref{asymrtau} there exists a constant $c_1>0$ such that
$\rho_{*} - \rho_{*}^{(1)} > c_1\delta_N$ and $\rho_{*}^{(1)}-\rho_{*}^{(0)}> c_1\delta_N$ simultaneously. 
Since for each $\tau \leq \tau_{*}^{(0)}$ it holds that 
$V_{\tau}(r) \leq V_{\tau}(\rho_{*}^{(1)})$
for all $r$ with $\rho_{*}^{(0)}\leq r \leq \rho_{*}^{(1)}$, we obtain that for $j$ with $j\leq N\tau_{*}^{(0)}$
\begin{align*}
\|\zeta^j\|^2_{L^2(\mu_N)}\geq \int_{\rho_{*}^{(0)}}^{\rho_{*}^{(1)}} 2r(\rho_{*} - r)^{\alpha} e^{h(r)} e^{-NV_{\tau}(r)} dr \geq  C_1 \delta_N^{\alpha+1} e^{-NV_{\tau}(\rho_{*}^{(1)})}
\end{align*} 
for some constant $C_1>0$. Observing that for $r=|\rho_{*}-\gamma_N z|$ with $z\in \calK$
\begin{align*}
V_{\tau}(r)- V_{\tau}(\rho_{*}^{(1)}) 
&= V_{\tau_{*}^{(1)}}(r)-V_{\tau_{*}^{(1)}}(\rho_{*}^{(1)})+2(\tau-\tau_{*})\log \frac{\rho_{*}^{(1)}}{r}\\
&\geq 2\Lap Q(\rho_{*}^{(1)}) (r-\rho_{*}^{(1)})^2 \geq c_2 \,\delta_N^2
\end{align*}
for some constant $c_2>0$, we have for all $j$ with $j\leq N\tau_{*}^{(0)}$
$$e^{-N(V_{\tau}(r)- V_{\tau}(\rho_{*}^{(1)}))} = O(e^{-c_2(\log N)^2})$$
as $N\to \infty$. This implies that for $j\leq N\tau_{*}^{(0)}$
\begin{align*}
|p_{N,j}(\zeta)|^2 e^{-NQ_N(\zeta)} = e^{h(\zeta)}(\rho_{*} - |\zeta|)^{\alpha}\frac{e^{-NV_{\tau}(|\zeta|)}}{\|\zeta^j\|^2_{L^2(\mu_N)}}  \leq C_2 \gamma_N^{\alpha} \delta_N^{-(\alpha+1)} e^{-c_2(\log N)^2}.
\end{align*}
Thus, we conclude that there exists a constant $c$ such that as $N\to\infty$,
\begin{equation*}
\gamma_N^2 \sum_{j= 0}^{N(\tau_{*}-\delta_N)}|p_{N,j}(\zeta)|^2e^{-NQ_N(\zeta)} = O(e^{-c(\log N)^2}),
\end{equation*}
which completes the proof.
\end{proof}

Combining Lemma \ref{lem:sum2} and Lemma \ref{lem:sum3},
we obtain that for $\zeta = \rho_{*} - \gamma_N z$ and $\eta = \rho_{*} -\gamma_N w$
\begin{align*}\label{erpart}
\gamma_N^2 \sum_{j=0}^{m_N}p_{N,j}(\zeta)\bar{p}_{N,j}(\eta) e^{-N(Q_{N}(\zeta)+Q_{N}(\eta))/2} = O(N^{-(1+\frac{\alpha}{2})}(\log N)^2),\quad N\to \infty
\end{align*}
by the Cauchy-Schwarz inequality. Here $O$-constant is uniform in compact subsets of $\RH^2$. 
Hence, the rescaled correlation kernel can be written as
\begin{equation}
K_N(z,w) = K_N^{\#}(z,w) + o(1),
\end{equation} 
where $o(1)\to 0$ as $N\to \infty$ locally uniformly in $\RH^2$. Now Lemma \ref{lem:ksp} completes the proof of Theorem \ref{thm:1}.

\subsection{Maximal modulus}
In this subsection, we prove Theorem \ref{thm:3}. To analyze the distribution of the maximal modulus $|\zeta|_N = \max_{1\leq j\leq N} |\zeta_j|$, we first observe that the distribution function of $|\zeta|_N$ is represented by the gap probability that there is no particle outside a disk, i.e.,
$$\mathbb{P}_N(|\zeta|_N \leq r) = \frac{1}{Z_N}\int_{D(0,r)^N}\Prob_{N,1}(\zeta_1,\cdots,\zeta_N)\,dA(\zeta_1)\cdots dA(\zeta_N).$$
Since $Q_N$ is radially symmetric, the determinantal structure of the system gives that
\begin{equation}\label{pbmamo}
\mathbb{P}_N(|\zeta|_N \leq r) = \prod_{j=0}^{N-1}  \left(1 - \int_{\calD \setminus D(0,r)} |p_{N,j}|^2 e^{-NQ_N} dA \right)
\end{equation}
for $r\leq \rho_{*}$. See \cite[Section 15.1]{Meh} and \cite{Dei} for the computation of gap probabilities. Also note that it was observed in \cite{Kos} that in the case when $Q(\zeta) = |\zeta|^2$, the maximal modulus of the particles has the same distribution as that of independent random variables.


Write $$a_N = \frac{1}{2}\,\gamma_N^{\frac{\alpha+2}{\alpha+1}}\Big(\frac{\rho_{*}}{\Gamma(\alpha+3)}\Big)^{-\frac{1}{\alpha+1}},\quad \gamma_N = \frac{\rho_{*}}{N(1-\tau_{*})}.$$
The rescaled maximal modulus $$\omega_N = a_N^{-1}\cdot(\rho_{*} -|\zeta|_N )$$ has a distribution
\begin{equation*}
\mathbb{P}_N\left(\omega_N \geq x\right) = \mathbb{P}_N\left(|\zeta|_N \leq \rho_{*} - a_N \,x\right).
\end{equation*}
We consider the sum
\begin{equation}\label{sumIN}
I_N(x):=\sum_{j=0}^{N-1}\int_{\calD\setminus D(0,\rho_N(x))}|p_{N,j}|^2 e^{-NQ_N} dA, \quad \rho_N(x) = \rho_{*} - a_N\, x,\end{equation}
and obtain the following lemma.
\begin{lem} \label{lem:IN}
We have that for $x\geq 0$,
$$\lim_{N\to\infty} I_N(x) = x^{\alpha+1}.$$
Here the convergence is uniform in $\R_{+}$.
\end{lem}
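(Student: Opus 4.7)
\medskip\noindent\textbf{Proof plan.} The strategy is to split the sum defining $I_N(x)$ at the degree cutoff $m_N$ and to show that only the high degrees $j \geq m_N$ contribute. I would write $I_N(x) = I_N^{\mathrm{low}}(x) + I_N^{\mathrm{high}}(x)$ where $I_N^{\mathrm{low}}$ collects the terms $0 \leq j < m_N$. The integration region $\calD \setminus D(0, \rho_N(x))$ is an annulus of width $a_N x$, so its area is $O(a_N)$ uniformly for $x$ in bounded sets. Combining this with the pointwise bounds of Lemma \ref{lem:sum2} and Lemma \ref{lem:sum3} yields $I_N^{\mathrm{low}}(x) = O(a_N \gamma_N^{-2} N^{-(1+\alpha/2)} \log N) + O(e^{-c(\log N)^2})$. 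Substituting $a_N \asymp N^{-(\alpha+2)/(\alpha+1)}$ and $\gamma_N \asymp N^{-1}$ makes the exponent of $N$ in the first term equal to $-\alpha/2 - 1/(\alpha+1)$, which is strictly negative for every $\alpha > -1$, so $I_N^{\mathrm{low}}(x) \to 0$.

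For $I_N^{\mathrm{high}}(x)$ I would pass to polar coordinates, substitute $r = \rho_{*} - t$, and use $h(\rho_{*}) = 0$ together with the Taylor expansion \eqref{ExV} of $V_\tau$ near $\rho_{*}$ to get
\begin{equation*}
\int_{\calD \setminus D(0,\rho_N(x))} |p_{N,j}|^2 e^{-NQ_N}\,dA = \frac{2\rho_{*}\, e^{-NV_\tau(\rho_{*})}}{\|\zeta^j\|^2_{L^2(\mu_N)}} \int_0^{a_N x} t^\alpha \,e^{-\frac{2N(\tau-\tau_{*})}{\rho_{*}}t}\,dt \cdot (1+o(1)).
\end{equation*}
Plugging in the norm asymptotic from Lemma \ref{lem:nm1} and substituting $u = 2N(\tau-\tau_{*})t/\rho_{*}$, this reduces to
\begin{equation*}
\frac{1}{\Gamma(\alpha+1)} \int_0^{u_{j,N}(x)} u^\alpha e^{-u}\,du \cdot (1+o(1)),\qquad u_{j,N}(x) := \frac{2N(\tau-\tau_{*})}{\rho_{*}}\,a_N x.
\end{equation*}

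The key observation is that $u_{j,N}(x) \leq 2 a_N x /\gamma_N = O(N^{-1/(\alpha+1)}\,x) \to 0$ uniformly in $j \in [m_N, N-1]$ and in bounded $x$, so the incomplete gamma integral collapses to its leading term $u_{j,N}(x)^{\alpha+1}/(\alpha+1)$. Summing over $j$ gives
\begin{equation*}
I_N^{\mathrm{high}}(x) = \frac{(2 a_N x/\rho_{*})^{\alpha+1}}{\Gamma(\alpha+2)} \sum_{j=m_N}^{N-1} (j - N\tau_{*})^{\alpha+1} \cdot (1+o(1)).
\end{equation*}
I would then approximate the sum by $\int_0^{N(1-\tau_{*})} s^{\alpha+1}\,ds = (N(1-\tau_{*}))^{\alpha+2}/(\alpha+2)$, noting that the lower cutoff at $s = M\sqrt{N}$ contributes only $O(N^{(\alpha+2)/2})$. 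The definitions of $a_N$ and $\gamma_N = \rho_{*}/(N(1-\tau_{*}))$ are tuned precisely so that all powers of $N$, $\rho_{*}$, and $(1-\tau_{*})$ cancel in the product, leaving $I_N^{\mathrm{high}}(x) \to x^{\alpha+1}$.

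The main obstacle is to maintain uniformity of the error terms simultaneously in $j$ (over $\sim N$ values) and in $x$. The uniformity in $j$ of the $o(1)$ in Lemma \ref{lem:nm1} handles most of this, and the Taylor remainder for $V_\tau$ over $t \in [0, a_N x]$ has size $N a_N^2 x^2 \to 0$ for bounded $x$. Finally, since $I_N(x)$ is monotone nondecreasing in $x$ and the candidate limit $x^{\alpha+1}$ is continuous, pointwise convergence upgrades to uniform convergence on compact subsets of $\R_{+}$ by a Dini-type argument.
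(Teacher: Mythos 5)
Your decomposition and the main thrust of the argument match the paper's proof almost exactly: split at $m_N$, discard the low degrees via Lemmas~\ref{lem:sum2} and~\ref{lem:sum3}, and for high degrees reduce each term to a lower incomplete gamma $\gamma(\alpha+1,b_\tau x)/\Gamma(\alpha+1)$, expand for small argument, and sum as a Riemann sum. Your closing Dini-type argument (using monotonicity of $I_N$ in $x$ and continuity of $x^{\alpha+1}$) is a clean way to obtain local uniformity; the paper simply tracks uniformity of the $o(1)$ errors, so this is a legitimate alternative.

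There is, however, a gap in your treatment of $I_N^{\mathrm{low}}$. You bound it by ``area of the annulus $\times$ pointwise bound from Lemma~\ref{lem:sum2}.'' But Lemma~\ref{lem:sum2} gives the bound $O(N^{-(1+\alpha/2)}\log N)$ for $\gamma_N^2\sum|p_{N,j}|^2 e^{-NQ_N^{\calD}}$ only \emph{uniformly on compact subsets of} $\RH$, i.e.\ for $z$ bounded away from the imaginary axis. The integration region $\calD\setminus D(0,\rho_N(x))$ corresponds, after rescaling, to $0<\re z < a_N x/\gamma_N$, which includes $\re z$ arbitrarily close to $0$. From the proof of Lemma~\ref{lem:sum2} the density actually behaves like $C\,N^{(1-\alpha)/2}(\re z)^\alpha$, which for $\alpha<0$ blows up as $\re z\to 0$, so no uniform pointwise bound exists over the annulus, and ``area $\times$ pointwise bound'' is not a valid majorant. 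A direct computation of the integral (integrating the $(\re z)^\alpha$ factor over the thin annulus and summing the $O(\sqrt N\log N)$ terms) gives $O(N^{-1-\alpha/2}\log N)$, which differs from your stated exponent $-\alpha/2 - 1/(\alpha+1)$ by a factor $N^{-\alpha/(\alpha+1)}$; for $\alpha<0$ your bound actually \emph{underestimates} the true size. Both exponents are negative for $\alpha>-1$, so the conclusion $I_N^{\mathrm{low}}(x)\to 0$ is still correct, but the step as written is not justified. The paper sidesteps this by leaving the estimate implicit (``a direct result from Section~\ref{ssec:l}''); the clean fix is to estimate the radial integral directly as in the proofs of Lemmas~\ref{lem:sum2}--\ref{lem:sum3}, rather than multiplying a pointwise bound by an area.
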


\begin{proof}
As in Section \ref{ssec:asym}, we write the integral in \eqref{sumIN} in terms of the function $V_{\tau}$, $$I_N(x) = \sum_{j=0}^{N-1} \|\zeta^j\|_{L^2(\mu_N)}^{-2}\int_{\rho_N(x)}^{\rho_{*}} 2r(\rho_{*}-r)^{\alpha}e^{h(r)}e^{-NV_{\tau}(r)} \, dr.$$ 
First fix $j$ with $m_N \leq j \leq N-1$ where $m_N =N\tau_{*}+M\sqrt{N}$ for sufficiently large $M$. Let $\calK$ be a compact subset of $\R_{+}$. From the proofs of Lemma \ref{lem:nm1} and Lemma \ref{lem:ksp}, we obtain that for all $x\in \calK$
\begin{align} \label{domint}
&\|\zeta^j\|_{L^2(\mu_N)}^{-2}\int_{\rho_N(x)}^{\rho_*} 2r(\rho_*-r)^{\alpha}e^{h(r)}e^{-NV_{\tau}(r)} \, dr \\
&= \frac{1}{\Gamma(\alpha+1)}\Big(\frac{2N(\tau-\tau_{*})}{\rho_*}\Big)^{\alpha+1} \int_{0}^{a_N x} s^{\alpha} e^{-\frac{2N}{\rho_{*}}(\tau-\tau_{*})\cdot s} ds \,(1+ o(1))\\
&= \frac{1}{\Gamma(\alpha+1)}
\int_{0}^{b_{\tau} x} t^{\alpha} e^{-t} dt\, (1+o(1)) = \frac{\gamma(\alpha+1, b_{\tau} \,x)}{\Gamma(\alpha+1)}
\, (1+o(1)),
\end{align}
where 
$b_{\tau} = \frac{2N(\tau-\tau_{*})}{\rho_{*}} \cdot a_N $ and  
$o(1)\to 0$ uniformly in $\calK$ as $N\to \infty$. In addition, the error bound can be taken uniformly in $j$ with $m_N\leq j \leq N-1$.
Note that $b_{\tau} = O(N^{-\frac{1}{\alpha+1}}(\tau-\tau_{*}))$.
 Using the asymptotic expansion of the lower incomplete gamma function near $0$, we have
$$\gamma(\alpha+1, b_{\tau} x) = \frac{1}{\alpha+1}(b_{\tau} x)^{\alpha+1} + O(b_{\tau}^{\alpha+2}),\quad N\to \infty.$$
Thus, the integral \eqref{domint} can be estimated as
\begin{align*}
\|\zeta^j\|_{L^2(\mu_N)}^{-2}\int_{\rho_N(x)}^{\rho_{*}} 2r(\rho_{*}-r)^{\alpha}e^{h(r)}e^{-NV_{\tau}(r)} \, dr = \frac{(b_{\tau} x)^{\alpha+1}}{\Gamma(\alpha+2)}\,(1+o(1)).
\end{align*}
Taking the sum over all $j$ with $m_N \leq j \leq N-1$ as in the proof of Lemma \ref{lem:ksp}, we obtain
\begin{align*}
I_N^{\#}(x):= \sum_{j\geq m_N}^{N-1} \frac{(b_{\tau} x)^{\alpha+1}}{\Gamma(\alpha+2)} = \frac{(\alpha+2)\,x^{
\alpha+1}}{N(1-\tau_{*})^{\alpha+2}} \sum_{j\geq m_N}^{N-1} (\tau-\tau_{*})^{\alpha+1},
\end{align*}
and by the Riemann sum approximation with step length $1/N$, we have
\begin{align*}
I_N^{\#}(x)=  \frac{(\alpha+2)\,x^{
\alpha+1}}{(1-\tau_{*})^{\alpha+2}}  \int_{0}^{1-\tau_{*}} s^{\alpha+1} ds \,(1+o(1)) = x^{\alpha+1}\,(1+o(1)),\quad N\to \infty.
\end{align*}

It is a direct result from Section \ref{ssec:l} that 
all the other terms are negligible in the sum $I_N(x)$.
Indeed, by Lemma \ref{lem:sum2} and Lemma \ref{lem:sum3}, for all $x\in \calK$
\begin{align*}
\sum_{j=0}^{m_N}\int_{\calD \setminus D(0,\rho_N(x))}|p_{N,j}|^2 e^{-NQ_N} dA = o(1)
\end{align*}
as $N\to \infty$. Hence, the proof is complete.

\end{proof}

\begin{proof}[Proof of Theorem \ref{thm:3}]
From Lemma \ref{lem:IN} and \eqref{pbmamo} we have
\begin{equation*}
\log \mathbb{P}_N(\omega\geq x) = -I_N(x)+o(1) = -x^{\alpha+1} +o(1)
\end{equation*}
locally uniformly in $x$ as $N\to \infty$. Hence we prove the theorem.
\end{proof}

\section{Rescaled Ward equation at the hard wall}\label{sec:ward}

In this section, we analyze a rescaled version of limiting Ward's equation 
\begin{equation}\label{rWard}
\dbar C(z) = R(z) -\Lap \log R(z),\quad z\in \RH
\end{equation}
and prove Theorem \ref{thm:2}.

We shall explain Ward identities briefly which has been proved in \cite[Proposition 3.1]{AHM3} \cite[Theorem 4.1]{AKM}. Ward identities we consider here are  identities of correlation functions obtained from the invariance of the partition function of the particle system under the reparametrization. While Ward identities to be described below shows a global relation between correlation functions, the rescaled version of Ward equation \eqref{rWard} gives a local information of them. 
See \cite{AKM, AKMW, AKS0, AKS, AS1} for the Ward's equation approach to the local properties of Coulomb gases. We also refer to \cite[Appendix 6]{KM} for Ward's identities in the context of conformal field theory.

\subsection{Ward identities}
To set things up we borrow some notations from the literatures \cite{AHM3, AKM}.
For a test function $\psi\in C^{\infty}_0(\calD)$, we define random variables 
\begin{align*}
\I_N[\psi] = \frac{1}{2}\sum_{j\ne k}\frac{\psi(\zeta_j)-\psi(\zeta_k)}{\zeta_j-\zeta_k}, \quad \II_N[\psi] = N\sum_{j=1}^{N} \d Q_N(\zeta_j)\cdot \psi(\zeta_j),\quad \III_N[\psi]=\sum_{j=1}^{N} \d \psi(\zeta_j)
\end{align*}
where $\{\zeta_j\}_i^N$ is Coulomb particle system associated with an external potential $Q_{N}^{\calD}$. 
Consider the Hamiltonian of the system 
$$H_N(\zeta_1, \cdots, \zeta_N) = -\sum_{j\ne k}\log|\zeta_k - \zeta_j| + N\sum_{j=1}^{N}Q_N(\zeta_j).$$
The distributional form of Ward identity from \cite{AHM3, AKM} is the relation that
\begin{equation}\label{Wardid}
\E_{N,1}[\I_N[\psi]-\II_N[\psi]+\III_N[\psi]]=0,
\end{equation}
where $\E_{N,1}$ is the expectation with respect to the Boltzmann-Gibbs distribution $\Prob_{N,1}$ associated with the potential $Q_N^{\calD}$ defined in \eqref{Gbs}.
A short proof for this identity using the integration by parts from \cite[Section 4.2]{BNY} is as follows. Consider the expectation $\E_{N,1}[\d \psi(\zeta_j)]$ for a test function $\psi \in C_0^{\infty}(\calD)$. The integration by parts gives that for all $j$ 
$$\E_{N,1}[\d \psi(\zeta_j)] = \E_{N,1}[\d_{\zeta_j} H_N(\zeta_1,\cdots,\zeta_N) \cdot \psi(\zeta_j)] ,$$
which proves \eqref{Wardid}.

Now we rescale the system $\Phi^{\calD}$ at the hard wall and derive an equation satisfied by the rescaled correlation functions. With rescaling via
$$\zeta = \rho_{*} - \gamma_N z, \quad \eta = \rho_{*} - \gamma_N w,$$ we write the rescaled correlation kernel and correlation functions as
$$K_N(z,w) = \gamma_N^2 \bfK_N(\zeta, \eta)\quad \mbox{and}\quad R_{N,k}(z,w) = \gamma_N^2 \bfR_{N,k}(\zeta,\eta).$$
Write $R_{N} = R_{N,1}$ and $\bfR_{N}=\bfR_{N,1}$ for simplicity. Now we define the Berezin kernel
$$ B_N(z,w) = \frac{|K_N(z,w)|^2}{K_N(z,z)}=\frac{R_N(z)R_N(w) - R_{N,2}(z,w)}{R_N(z)},$$
and the Cauchy transform 
$$ C_N(z) = \int_{\C} \frac{B_N(z,w)}{z-w} dA(w).$$

\begin{lem}
We obtain that
$$\dbar C_N(z) = R_N(z) - \Lap \log R_N(z) + o(1),$$
where $o(1)\to 0$ locally uniformly in $\RH$ as $N\to \infty$.
\end{lem}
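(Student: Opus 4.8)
The plan is to start from the exact Ward identity \eqref{Wardid} and translate it into a relation between rescaled correlation functions, then let $N\to\infty$ and identify the limit. First I would apply the identity to test functions of the form $\psi(\zeta) = \gamma_N^{-1}\psi_0\big((\rho_{*}-\zeta)/\gamma_N\big)$ for a fixed $\psi_0 \in C_0^\infty(\RH)$, so that the difference quotient $(\psi(\zeta_j)-\psi(\zeta_k))/(\zeta_j-\zeta_k)$ rescales correctly. Expressing $\E_{N,1}[\one_N[\psi]]$, $\E_{N,1}[\two_N[\psi]]$, $\E_{N,1}[\three_N[\psi]]$ in terms of $R_{N}$ and $R_{N,2}$ and changing variables, one obtains, after integration by parts in $z$, an identity of the form
\begin{equation*}
\int_{\RH} \psi_0(z)\Big( \dbar C_N(z) - R_N(z) + \Lap\log R_N(z)\Big)\, dA(z) = \mathcal{E}_N[\psi_0],
\end{equation*}
where $\mathcal{E}_N[\psi_0]$ collects the error terms coming from (i) the Taylor expansion of the rescaled potential $N\,\d Q_N$ around the hard wall and (ii) the replacement of the true $N$-dependent quantities by their scaling limits; since $\psi_0$ is arbitrary this yields the claimed pointwise (locally uniform) statement.

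The key steps, in order, would be: \textbf{(1)} record the rescaled form of each of the three terms — here $\three_N$ contributes $\d\log R_N$-type terms and, combined with $\one_N$, produces the $\Lap\log R_N$ after an integration by parts, exactly as in \cite[Theorem 4.1]{AKM}; \textbf{(2)} handle the $\two_N$ term, which requires the expansion $N\,\d Q_N(\rho_{*}-\gamma_N z) = $ (a constant absorbable into a cocycle) $+ O(\gamma_N) = o(1)$ in the relevant scale, using that $Q$ is $C^{1,1}$ and strictly subharmonic near $\rho_{*}$, together with the logarithmic term $-\frac{\alpha}{N}\log(\rho_{*}-|\zeta|)$ whose derivative is $O(1/(N\gamma_N))=O(1)$ and contributes the boundary $(2\re z)^{\alpha/2}$-weight consistently on both sides; \textbf{(3)} control the tail/compact-support issues: the rescaled kernel estimates from Lemma \ref{lem:ksp} together with the negligibility of low-degree polynomials (Lemmas \ref{lem:sum2}, \ref{lem:sum3}) give locally uniform convergence $R_N \to R$, $R_{N,2}\to R\otimes R - |K|^2$ and control of $B_N$, so that $C_N \to C$ and all the $o(1)$'s are genuinely negligible after testing against $\psi_0$; \textbf{(4)} upgrade the distributional identity to a pointwise one using that $R_N$ and $C_N$ are smooth with locally uniformly bounded derivatives (elliptic regularity / the explicit Laplace-integral form), so convergence of the left side as distributions plus equicontinuity gives the pointwise limit.

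The main obstacle I expect is step \textbf{(2)} combined with the boundary effect in step \textbf{(3)}: one must check that the hard-wall cutoff $\1_{\calD}$ and the logarithmic singularity $-\frac{\alpha}{N}\log(\rho_{*}-|\zeta|)$ do not generate spurious boundary contributions when integrating by parts — in particular that the factor $(2\re z)^{\alpha/2}$ appearing in $K^{(\alpha)}$ is handled symmetrically, since $\Lap\log R(z) = \Lap\log\big((2\re z)^\alpha g(\re z)\big)$ picks up a $-\alpha/(4(\re z)^2)$ term that must be matched by a corresponding term in $\dbar C(z)$ coming from the $\two_N$ side (the derivative of the log-potential) rather than being an error. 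Keeping track of which pieces are cocycles (and hence invisible to $|K_N|^2$ and to $R_N$) versus which genuinely enter the equation is the delicate bookkeeping; once the $\two_N$ term is correctly split into a cocycle part plus an $o(1)$ part plus the $\alpha$-contribution, the rest follows the template of \cite{AKM, AKS0}. A secondary technical point is justifying the Riemann-sum/dominated-convergence passages uniformly up to the wall, for which the bound $e^{-\xi_\tau^2/2}/\varphi_\alpha(\xi_\tau) \le C M^{\alpha+1}$ from the proof of Lemma \ref{lem:sum2} and the exponential decay in Lemma \ref{lem:sum3} are the needed inputs.
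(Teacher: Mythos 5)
Your overall route is the same as the paper's: rescale a test function supported in $\RH$, insert it into the Ward identity \eqref{Wardid}, rewrite $\E_{N,1}\,\I_N$, $\E_{N,1}\,\II_N$, $\E_{N,1}\,\III_N$ in terms of $R_N$ and $R_{N,2}$, divide by $R_N$ and apply $\dbar$. One remark on economy: your steps (3) and (4) (convergence of $R_N$, $R_{N,2}$, $C_N$ via Lemmas \ref{lem:ksp}, \ref{lem:sum2}, \ref{lem:sum3}) are not needed for this lemma. After the manipulations one arrives at the \emph{exact} finite-$N$ identity
\begin{equation*}
\dbar C_N(z) = R_N(z) - \Lap\log R_N(z) - N\gamma_N^2\,\Lap Q_N(\rho_{*}-\gamma_N z)
\end{equation*}
in the sense of distributions, so the only thing to estimate is the last, completely explicit term; no asymptotics of the kernel enter, and the upgrade to a locally uniform statement concerns only that term.

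The genuine issue is exactly the one you flag in step (2), but your proposed resolution does not close it. What must be shown is $N\gamma_N^2\Lap Q_N(\rho_{*}-\gamma_N z)=o(1)$. The contributions of $Q$ and $h$ are $O(N\gamma_N^2)=O(N^{-1})$ and $O(\gamma_N^2)$, but $\log(\rho_{*}-|\zeta|)$ is \emph{not} harmonic: for a radial function $g(r)$ one has $4\Lap g=g''+g'/r$, hence $\Lap\log(\rho_{*}-|\zeta|)=-\tfrac14(\rho_{*}-|\zeta|)^{-2}\,(1+O(\rho_{*}-|\zeta|))$, and since $\rho_{*}-|\rho_{*}-\gamma_N z|=\gamma_N\re z+O(\gamma_N^2)$,
\begin{equation*}
N\gamma_N^2\,\Lap Q_N(\rho_{*}-\gamma_N z) = -\alpha\,\gamma_N^2\,\Lap\log(\rho_{*}-|\zeta|)+O(N^{-1}) = \frac{\alpha}{4(\re z)^2}+o(1),
\end{equation*}
which is not $o(1)$ unless $\alpha=0$. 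Nor can it be "matched" in the way you describe: the $\II_N$ contribution is a separate additive term on the right-hand side of the identity, not a piece of $\dbar C_N$. What actually happens is that this term cancels against the $+\alpha/(4(\re z)^2)$ hidden in $-\Lap\log R_N$ (coming from the weight $(\rho_{*}-|\zeta|)^{\alpha}$ inside $R_N$), leaving $\dbar C_N = R_N - \Lap\log\bigl[(2\re z)^{-\alpha}R_N\bigr]+o(1)$; equivalently, the equation as stated acquires the extra term $-\alpha/(4(\re z)^2)$ when $\alpha\neq0$, in analogy with the point-charge correction in \cite{AKS0}. So for $\alpha=0$ your argument (and the paper's one-line version of it) is fine, while for $\alpha\neq0$ no bookkeeping along these lines can recover the statement as written -- the statement itself needs the extra term, and the same correction then propagates to the limiting Ward equation for $K^{(\alpha)}$.
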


\begin{proof}
We apply the argument in \cite{AKM} to the hard wall case in a straightforward way. As in \cite[Theorem 7.1]{AKM}, we consider a test function $\psi$ whose support is contained in $\RH$. Then for sufficiently large $N$, the test function $\psi_N$ defined by $$\psi_N(\zeta)=\psi(\gamma_N^{-1}(\rho_{*}-\zeta))$$
has its support in $\calD$.
Now we write the expectation of random variables in the Ward identity \eqref{Wardid} in terms of the rescaled correlation functions as follows:
\begin{align*}
\E_N\,\I_N[\psi_N] 
&= \frac{1}{2}\int\!\!\!\int_{\C^2}\frac{\psi_N(\zeta)-\psi_N(\eta)}{\zeta-\eta}\, \bfR_{N,2}(\zeta,\eta) \,dA(\zeta)dA(\eta) \\ 
&=-\frac{1}{2\gamma_N}\int\!\!\!\int_{\C^2}\frac{\psi(z)-\psi(w)}{z-w} \, R_{N,2}(z,w)\, dA(z) dA(w) \\
&=-\gamma_N^{-1}\int_{\C} \psi(z) \, dA(z)\int_{\C} \frac{R_{N,2}(z,w)}{z-w}dA(w), \\
\E_N\,\II_N[\psi_N] 
&=N \int_{\C} \d Q_N(\zeta)\, \psi_N(\zeta)\, \bfR_{N}(\zeta)\,dA(\zeta)\\
&=N\int_{\C}  \d Q_N(\rho_{*} -\gamma_N z) \, \psi(z) \,R_{N}(z)\,dA(z),\\
\E_N\III_N[\psi_N] 
&= \int_{\C} \d\psi_N(\zeta)\, \bfR_{N}(\zeta)\,dA(\zeta)=-\gamma_N^{-1} \int_{\C}\d \psi(z)\,R_{N}(z)\,dA(z) \\
&= \gamma_N^{-1}\int_{\C} \d R_{N}(z)\, \psi(z) \,dA(z).
\end{align*}  
Since $\psi$ is arbitrary, we obtain from Ward identity \eqref{Wardid} that 
\begin{equation}
\int_{\C} \frac{R_{N,2}(z,w)}{z-w} dA(w) = -N\gamma_N \d Q_N(\rho_{*} - \gamma_N z)\cdot R_{N}(z) + \d R_N(z)
\end{equation}
in the sense of distributions in $\RH$.
Dividing each side by $R_N$ and taking a $\d$-derivative, we have
$$\dbar C_N(z) = R_N(z) -\Lap \log R_N(z)+ o(1)$$
since $N\gamma_N^2 \Lap Q_N(\rho_{*}-\gamma_N z) = o(1)$ locally uniformly in $\RH$ as $N\to \infty$. 
\end{proof}


\subsection{Mass-one equation and Ward's equation}
Let $K_{f}$ be a kernel of the form
$$K_{f}(z,w) = (2\re z)^{\frac{\alpha}{2}}(2\re w)^{\frac{\alpha}{2}}\int_{0}^{\infty} f(t)\,e^{-t(z+\bar{w})}\, dt,\quad z,\,w\in \RH$$
for some Borel measurable function $f$ on $\R_{+}$.

\begin{lem}
The mass-one equation 
$$\int_{\RH} |K_f(z,w)|^2 dA(w) = K_f(z,z),\quad z\in \RH$$
holds if and only if $f(s)=\Gamma(\alpha+1)^{-1} s^{\alpha+1}\cdot \1_{E}(s)$ almost everywhere for some Borel set $E$ in $\R_{+}$.
\end{lem}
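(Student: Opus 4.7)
The plan is to reduce both sides of the mass-one equation to Laplace transforms in the single real variable $2\re z$ and then invoke injectivity of the Laplace transform. I will assume $f$ is real (this also falls out of the argument).

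Fix $z \in \RH$ and write $w = u + iv$ with $u>0$. Factoring out the prefactor, write $K_f(z,w) = (2\re z)^{\alpha/2}(2\re w)^{\alpha/2}\,g_z(w)$, where
$$g_z(u+iv) = \int_0^\infty \bigl[f(t)\,e^{-t(z+u)}\bigr]\, e^{itv}\, dt.$$
For each fixed $u>0$, this is the Fourier transform in $v$ of $t\mapsto f(t) e^{-t(z+u)}\1_{(0,\infty)}(t)$. Plancherel in $v$, followed by the Gamma integral $\int_0^\infty u^{\alpha} e^{-2tu}\,du = \Gamma(\alpha+1)(2t)^{-\alpha-1}$ in $u$, collapses the double integral on the left of the mass-one identity to a single Laplace transform:
$$\int_\RH |K_f(z,w)|^2\, dA(w) = (2\re z)^{\alpha}\,\Gamma(\alpha+1)\int_0^\infty \frac{|f(t)|^2}{t^{\alpha+1}}\,e^{-2t\re z}\, dt.$$
Since also $K_f(z,z) = (2\re z)^{\alpha}\int_0^\infty f(t)\,e^{-2t\re z}\,dt$, the mass-one equation becomes
$$\int_0^\infty\Bigl[\Gamma(\alpha+1)\,|f(t)|^2/t^{\alpha+1} - f(t)\Bigr]\,e^{-2t\re z}\, dt = 0 \quad\text{for all } \re z > 0.$$

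By Laplace transform injectivity, the bracket vanishes for a.e.\ $t > 0$. Its first term is nonnegative real, so $f$ itself is forced to be nonnegative real a.e., and the resulting pointwise quadratic equation then admits only the two solutions $f(t) = 0$ and $f(t) = t^{\alpha+1}/\Gamma(\alpha+1)$. Setting $E = \{t > 0 : f(t) \neq 0\}$ delivers the claimed form (with $E$ Borel since $f$ is); the converse is immediate by running the calculation backwards.

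The main technical point will be justifying the Fubini--Plancherel interchange. This is automatic whenever $\int_\RH |K_f(z,w)|^2\,dA(w)$ is finite, which is guaranteed by the mass-one hypothesis itself (for the forward direction) or by the explicit form $f(t) = \Gamma(\alpha+1)^{-1}t^{\alpha+1}\1_E(t)$ (for the converse). If needed, one can argue with general nonnegative $f$ by truncating to $t \in [1/n, n]$ and passing to the limit via monotone convergence.
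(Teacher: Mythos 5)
Your proposal is correct and follows essentially the same route as the paper: both reduce the $w$-integral to a Laplace transform in $\re z$ (you via Plancherel in $\im w$ followed by the Gamma integral in $\re w$; the paper via the same Fourier collapse written as a distributional identity) and then invoke injectivity of the Laplace transform to obtain the pointwise quadratic equation for $f$. Your remarks on the real-nonnegativity of $f$ falling out of the equation and on justifying Fubini are sensible refinements of what the paper leaves implicit, but the substance is identical.
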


\begin{proof}
Write $z=x+iy$ and $w=u+iv$ for $x,y,u,v \in \R$ and compute the integral in the mass-one equation as follows:
\begin{align*}
\int_{\RH}&|K_f(z,w)|^2 \,dA(w) 
=(2\re z)^{{\alpha}}\!\!\int_{\RH}\int_{0}^{\infty}\!\!\!\!\int_{0}^{\infty}(2\re w)^{{\alpha}} e^{-s(z+\bar{w})}e^{-t(\bar{z}+w)}f(s)f(t)\,ds \,dt \,dA(w)\\
&=\frac{(2x)^{{\alpha}}}{\pi}\!\!\int_{0}^{\infty}\!\!\!\!\int_{0}^{\infty}\!\!\!\!\int_{0}^{\infty}\!\!\!\!\int_{\R} (2u)^{{\alpha}}e^{-x(s+t)}e^{iy(t-s)}e^{-u(s+t)}e^{iv(s-t)}f(s)f(t)\,dv\,du\,ds\,dt\\
&=(2x)^{{\alpha}}\!\!\int_{0}^{\infty}\!\!\!\!\int_{0}^{\infty} 2 (2u)^{{\alpha}}e^{-2x \cdot s} e^{-2s \cdot u}\, f(s)^2 du\,ds = (2x)^{\alpha}\int_{0}^{\infty} e^{-2x\cdot s}\, f(s)^2 \frac{\Gamma(\alpha+1)}{s^{\alpha+1}}\,ds.
\end{align*}
Thus, the mass-one equation holds if and only if for all $x>0$
$$\int_{0}^{\infty} e^{-2x\cdot s} f(s) ^2 \frac{\Gamma(\alpha+1)}{s^{\alpha+1}}\, ds = \int_{0}^{\infty} e^{-2x\cdot s} f(s) \,ds,$$
which implies that $f(s) =\Gamma(\alpha+1)^{-1} s^{\alpha+1}\1_{E}(s)$ almost everywhere for some Borel measurable set $E$ in $\R_{+}$.
\end{proof}

\begin{lem}
The rescaled Ward's equation 
\begin{equation*}
\dbar C(z) = R(z) -\Lap \log R(z),\quad z\in \RH
\end{equation*}
holds for $K_f$ if and only if there exists a connected set $E$ in $\R_{+}$ such that 
$f(s) = \Gamma(\alpha+1)^{-1} s^{\alpha+1} \cdot \1_{E}(s)$ almost everywhere in $\R_{+}$.
\end{lem}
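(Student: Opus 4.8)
The plan is to combine the mass-one equation just established with the rescaled Ward equation to pin down the support set $E$. From the previous lemma, the rescaled Ward equation can only hold if $f(s) = \Gamma(\alpha+1)^{-1} s^{\alpha+1}\1_E(s)$ for some Borel set $E$ (since the mass-one equation is a necessary consequence of Ward's equation, as the $K^{(\alpha)}$ kernel satisfies both — though strictly one must check that Ward's equation implies mass-one in this setting, which follows from integrating the Ward identity or is already built into the abstract setup). So I would first reduce to kernels of the form $K_E(z,w) = (2\re z)^{\alpha/2}(2\re w)^{\alpha/2}\Gamma(\alpha+1)^{-1}\int_E s^{\alpha+1}e^{-s(z+\bar w)}\,ds$ and then show Ward's equation forces $E$ to be an interval (up to null sets).

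The key computational step is to write everything in terms of $x = \re z$. By the translation invariance, $R(z) = K_E(z,z)$ depends only on $x$; write $R(z) = (2x)^{\alpha} g(2x)$ where $g(u) = \Gamma(\alpha+1)^{-1}\int_E s^{\alpha+1}e^{-us}\,ds$ is (a constant times) the Laplace transform of $s^{\alpha+1}\1_E(s)$. Then $\Lap \log R = \d\dbar \log R = \tfrac14 \partial_x^2 \log R$ is an explicit function of $x$. For the Cauchy transform term, I would use the mass-one normalization to see that $B(z,w)/R(z)$-type integrals collapse: the Berezin kernel $B(z,w) = |K_E(z,w)|^2/K_E(z,z)$, and since $\int_{\RH} B(z,w)\,dA(w) = 1$ by mass-one, $C(z) = \int_{\RH} B(z,w)(z-w)^{-1}\,dA(w)$. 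Carrying out the $w$-integration as in the mass-one lemma (integrate over $\im w$ first to get a delta-type constraint, then over $\re w$), $C(z)$ should reduce to an explicit integral expression in $x$, and $\dbar C(z) = \tfrac12 \partial_x C$ becomes an ODE-type identity in the single variable $x$. The upshot is that Ward's equation becomes a relation purely among $g$, $g'$, $g''$ — equivalently, a functional equation for the Laplace transform of $s^{\alpha+1}\1_E(s)$.

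The decisive step is then to argue that this functional equation has, among Laplace transforms of functions of the form $s^{\alpha+1}\1_E(s)$, only those with $E$ an interval. I expect the equation to take the shape: after clearing denominators, a polynomial (in the differential-operator sense) identity forcing $g$ to satisfy a second-order linear ODE with specific coefficients; the solution space of that ODE is two-dimensional, and only the members expressible as the required Laplace transform with $E$ connected survive. Concretely, $\int_0^1 t^{\alpha+1}e^{-tu}\,dt$ and its reparametrizations $\int_a^b$ should be the admissible ones, while a disconnected $E = (a,b)\cup(c,d)$ would make $g$ a sum of two such pieces and violate the ODE (the ODE being inhomogeneous/rigid enough that sums don't solve it). This rigidity — ruling out disconnected $E$ — is the main obstacle, and I would handle it by examining the boundary terms that appear when integrating by parts against $e^{-tu}$: the indicator $\1_E$ contributes boundary evaluations at $\partial E$, and matching these across the Ward equation pins down that $\partial E$ consists of exactly the two expected endpoints. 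A clean way to organize this: show Ward's equation is equivalent to $\int_{\RH} B(z,w)\,\overline{\partial_w \log K_E(\cdot,w)}\big|\,\dots$ type identity that, after the explicit reductions, says $g$ solves $u g'' + (\text{const})g' + (\text{const})g = (\text{source from }\partial E)$; since $s^{\alpha+1}\1_{(a,b)}$ has Laplace transform solving exactly such an equation with source supported at the two endpoint contributions, and no genuinely disconnected $E$ can, we conclude $E$ is connected.

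I should note one subtlety to flag rather than resolve: one must be slightly careful about which $E$ actually produce a \emph{nonnegative} $1$-point function $R$ and a genuine reproducing kernel — but for the statement as phrased (characterizing solutions $K_f$ of Ward's equation of the given Laplace-integral form) the argument above suffices, and the normalization $\Gamma(\alpha+1)^{-1}$ together with $\1_{(0,1)}$ in Theorem \ref{thm:1} is recovered by additionally imposing the correct behavior as $\re z \to 0^+$ or $\re z\to\infty$, i.e.\ the boundary/decay conditions that single out $E = (0,1)$ among all intervals $E = (a,b)$.
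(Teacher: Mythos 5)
Your proposal has a genuine gap at its very first step. You reduce to $f(s)=\Gamma(\alpha+1)^{-1}s^{\alpha+1}\1_E(s)$ by invoking the mass-one lemma, on the asserted grounds that ``the mass-one equation is a necessary consequence of Ward's equation.'' That implication is not established anywhere, and you yourself flag it as something that ``one must check'' without checking it; in this abstract setting (an arbitrary Borel $f$, no underlying finite-$N$ process) there is no reason the Ward equation should imply the reproducing/mass-one property, and the paper never uses such an implication. The paper's proof of this lemma is entirely independent of the mass-one lemma: it computes $R(z)C(z)$ directly via the Fourier identity for $\int_{\R} e^{i(s-t)v}(u+iv)^{-1}\,dv$, splits the result as $-L_1+L_2$ with $L_1$ expressed through the incomplete gamma function $\Gamma(\alpha+1,2xs)$, observes that $-\dbar(L_1/R)=R$, and thereby reduces the Ward equation to the exact identity $L_2+\d R=cR$. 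Uniqueness of Laplace transforms then yields $\int_0^t f(s)s^{-(\alpha+1)}\Gamma(\alpha+1)\,ds=t+c$ wherever $f(t)\neq 0$, which delivers \emph{both} the explicit form $f(t)=\Gamma(\alpha+1)^{-1}t^{\alpha+1}\1_E(t)$ \emph{and} the connectedness of $E$ in one stroke: for $t\in E$ one gets $|E\cap(0,t)|=t+c$, and since $t\mapsto t-|E\cap(0,t)|$ is nondecreasing and constant on $E$, the set $E$ must be an interval up to null sets.

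Your second half is also not yet a proof. The claim that a disconnected $E$ ``would violate the ODE'' because ``the ODE is inhomogeneous/rigid enough that sums don't solve it'' is precisely the assertion that needs an argument; you describe the expected shape of a functional equation for the Laplace transform $g$ and a boundary-term bookkeeping at $\partial E$, but you never derive the equation or carry out the matching. The integration over $\im w$ in the Cauchy transform does not simply produce a delta constraint as in the mass-one computation: the factor $(z-w)^{-1}$ turns it into a sign-dependent half-line condition ($t>s$ versus $t<s$ depending on the sign of $u$), which is what generates the genuinely new term $L_2$ involving $\int_0^t$, and that term is where the connectedness information lives. If you want to repair the argument along the paper's lines, carry out that $v$-integration explicitly, isolate $L_2$, and extract the condition on $f$ from $L_2+\d R=cR$ rather than importing it from the mass-one lemma.
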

\begin{proof}
First 
we compute the Cauchy transform as follows: for $z=x+iy$
\begin{align*}
R(z)\, C(z)
&=\int_{\RH} |K_f(z,w)|^2 \frac{1}{z-w}\, dA(w) \\
&=\frac{(2x)^{\alpha}}{\pi}\!\int_{0}^{\infty}\!\!\!\!\int_{0}^{\infty}\!f(s)\,f(t)\!\int_{-\infty}^{\infty}\int_{-x}^{\infty} (2(u+x))^{\alpha} e^{-(s+t)(u+2x)}\,\frac{e^{i(s-t)v}}{-(u+iv)}\,du\,dv \, ds\,dt.
\end{align*}
Using the fact that
$$\int_{-\infty}^{\infty}\frac{e^{i(s-t)v}}{u+iv}\frac{dv}{\pi}=
\begin{cases}
-2 e^{u(t-s)}\,\1_{t>s},\quad & \mathrm{if}\quad u<0,\\
2e^{u(t-s)}\,\1_{t<s}, \quad & \mathrm{if} \quad u>0,
\end{cases}$$
we obtain that
$R(z)C(z)=-L_1(z) + L_2(z),$ where
\begin{align*}
L_1(z) &= (2x)^{\alpha}\int_{0}^{\infty}\!\!\!\int_{0}^{\infty} f(s)f(t) \!\int_{0}^{\infty} 2(2(u+x))^{\alpha}e^{-2x(s+t)}e^{-2s\cdot u}du \, ds\,dt,\end{align*}
and
\begin{align*}
L_2(z) &= (2x)^{\alpha}\int_{0}^{\infty}\!\!\!\int_{0}^{t}f(s)f(t)\!\int_{-x}^{\infty}2(2(u+x))^{\alpha}e^{-2x(s+t)} e^{-2s\cdot u} \,du\, ds\,dt.
\end{align*} 
A simple calculation gives  
\begin{align*}
L_1(z) &= (2x)^{\alpha} \int_{0}^{\infty}\!\!\!\int_{0}^{\infty} f(s)f(t)\,e^{-2x\cdot t} s^{-(\alpha+1)} \Gamma(\alpha+1, 2xs) \,ds\,dt \\
&= R(z)\int_{0}^{\infty} f(s)\,s^{-(\alpha+1)} \Gamma(\alpha+1, 2xs)\,ds,
\end{align*}
and
\begin{align*}
L_2(z) &= (2x)^{\alpha}\int_{0}^{\infty}\!\!\!\int_{0}^{t} f(s)f(t)\, e^{-2x\cdot t} s^{-(\alpha+1)} \Gamma(\alpha+1) \,ds\,dt.
\end{align*}
Here, $\Gamma(\alpha+1,x)=\int_{x}^{\infty} t^{\alpha} e^{-t} dt$ is the upper incomplete gamma function.
Since $L_1$, $L_2$, and $R$ are functions of $x=\re z$ only, we see that
\begin{align*}
-\dbar\left(\frac{L_1}{R}\right)=\int_{0}^{\infty} f(s) (2x)^{\alpha}\,e^{-2x\cdot s} ds =R(z).
\end{align*}
Thus, the Ward equation is equivalent to the relation
\begin{align*}
\dbar\left(\frac{L_2}{R}\right) = \dbar\left(\frac{L_1}{R} + C\right)= -R + \dbar{C} = - \dbar \left(\frac{\d R}{R}\right),
\end{align*}
which implies that
$$L_2 + \d R = c R$$
for some constant $c$. This relation can be written as
\begin{align*}
\int_{0}^{\infty}\!\!\!\int_{0}^{t} &f(s) f(t) \, e^{-2x\cdot t}s^{-(\alpha+1)}\Gamma(\alpha+1)\,ds\,dt = \int_{0}^{\infty}  (t+c)\,f(t)\, e^{-2x\cdot t} dt 
\end{align*}
so that we obtain
$$\int_{0}^{t}f(s) s^{-(\alpha+1)}\Gamma(\alpha+1)\,ds = t+c$$
when $f(t)\ne 0$. This gives $f(t)= \Gamma(\alpha+1)^{-1} t^{\alpha+1}\cdot \1_{E}(t)$ a.e. for some subset $E$ of $\R_{+}$. Hence $E$ is connected since for each $t\in E$,
$|E \cap (0,t) | = t+c$ where $|E|$ denotes Lebesgue measure of a set $E\subset \R$.
\end{proof}

\subsection*{Acknowledgements}

The author thanks Nam-Gyu Kang for valuable advice and discussions. This work was partially supported by the KIAS Individual Grant (MG063103) at Korea Institute for Advanced Study and by the National Research Foundation of Korea (NRF) Grant funded by the Korea government (2019R1F1A1058006).


\end{document}